\def\BibTeX{{\rm B\kern-.05em{\sc i\kern-.025em b}\kern-.08em
    T\kern-.1667em\lower.7ex\hbox{E}\kern-.125emX}}
\def\checkmark{\tikz\fill[scale=0.4](0,.35) -- (.25,0) -- (1,.7) -- (.25,.15) -- cycle;} 
\newcommand{\fakeparagraph}[1]{\vspace{2mm}\noindent\textit{#1:}}
\newcommand{\thesystem}{Our-System\xspace}
\newcommand{\thesystemfull}{Our-System-full\xspace}
\newcommand{\thesystemcrippled}{Our-System-serial\xspace}
\newtheorem{definition}{Definition}
\newtheorem{theorem}{Theorem}
\newtheorem{proof}{Proof}
\newcommand{\mynote}[3]{
    \fbox{\bfseries\sffamily\scriptsize#1}
    {\small$\blacktriangleright$\textsf{\emph{\color{#3}{#2}}}$\blacktriangleleft$}}}
\newcommand{\mynote}[3]{}}
\begin{document}%

%% \title{Kauri: Scaling Byzantine Fault Tolerant Consensus to large Trees}

\title{The quest for scaling BFT Consensus through Tree-Based Vote Aggregation}

\author{\IEEEauthorblockN{Ray Neiheiser, Miguel Matos, Lu\'{\i}s Rodrigues}
\IEEEauthorblockA{\textit{INESC-ID \& Instituto Superior T\'{e}cnico, Universidade de Lisboa}}
%% \IEEEauthorblockA{\textit{dept. name of organization (of Aff.)} \\
%% \textit{name of organization (of Aff.)}\\
%% City, Country \\
%% email address or ORCID
%% }
%% \and
%% \IEEEauthorblockN{2\textsuperscript{nd} Given Name Surname}
%% \IEEEauthorblockA{\textit{dept. name of organization (of Aff.)} \\
%% \textit{name of organization (of Aff.)}\\
%% City, Country \\
%% email address or ORCID}
%% \and
%% \IEEEauthorblockN{3\textsuperscript{rd} Given Name Surname}
%% \IEEEauthorblockA{\textit{dept. name of organization (of Aff.)} \\
%% \textit{name of organization (of Aff.)}\\
%% City, Country \\
%% email address or ORCID}
%% \and
%% \IEEEauthorblockN{4\textsuperscript{th} Given Name Surname}
%% \IEEEauthorblockA{\textit{dept. name of organization (of Aff.)} \\
%% \textit{name of organization (of Aff.)}\\
%% City, Country \\
%% email address or ORCID}
}

\maketitle

\thispagestyle{plain}

\begin{abstract}
With the growing commercial interest in blockchain, permissioned implementations have received increasing attention. Unfortunately, existing BFT consensus protocols that are the backbone of permissioned blockchains, either scale poorly or offer limited throughput. Most of these algorithms require at least one process to receive and validate the votes from all other processes and then broadcast the result, which is inherently non-scalable. Some algorithms avoid this bottleneck by using aggregation trees to collect and validate votes. However, to the best of our knowledge, such algorithms offer limited throughput and degrade quickly in the presence of faults. In this paper we propose \thesystem, the first BFT communication abstraction that organizes participants in a tree to perform scalable vote aggregation and that, in faulty runs, is able to terminate the protocol within an optimal number of reconfigurations ($f+1$). We define precisely which aggregation trees allow for optimal reconfiguration and show that, unlike previous protocols, when using these configurations, \thesystem scales to large number of processes and  outperforms HotStuff's throughput by up to 38x.
\end{abstract}

\begin{IEEEkeywords}
Byzantine Fault Tolerance, Consensus, Blockchain, Distributed Ledger, Communication Graph
\end{IEEEkeywords}

%\newpage

\section{Introduction}

As the range of blockchain use-cases expands to enterprise and government applications, permissioned blockchains, such as Hyperledger Fabric\cite{HyperledgerBlockchain}, have gained increasing attention. 
By having a restricted and well-known number of participants, permissioned blockchains can employ classical consensus protocols that have two main advantages over permissionless approaches: transaction finality and high throughput\cite{PoWVsBft}.

In this context, it is also relevant to design permissioned blockchains that are able to scale to hundreds of participants~\cite{sbft}. For instance,  Diem (formerly Libra) states that: ``Our goal was to choose a protocol that would initially support at least 100 validators and would be able to evolve over time to support 500–1,000 validators."\cite{libradocs}.

However, existing classical byzantine fault-tolerant consensus protocols scale poorly with the number of participants\cite{dls,byzGens}. This limitation stems from the large number of messages that need to be received, sent, and processed by a single process to reach consensus. For instance, the well-known PBFT protocol\cite{pbft} organizes participants in a clique resulting in a quadratic message complexity. More recent algorithms, such as HotStuff\cite{hotstuff}, organize participants in a star topology, where the leader sends/collects  information directly to/from all other participants, which reduces the message complexity but not the load on the coordinator, that still has to receive and validate votes from all other processes. These designs are, therefore, inherently non-scalable. 

An effective strategy to distribute the load and thus improve scalability is organizing the participants in a tree in combination with cryptographic schemes such as multi signatures\cite{byzcoin} or aggregated signatures\cite{bls}. 
This way, votes can be verified and aggregated as they are relayed up the tree to the leader, effectively spreading the computational load over the internal nodes. The leader can then verify and aggregate the result at a fraction of the computational cost, and disseminate it down the tree, evenly distributing bandwidth usage.
%An effective strategy to distribute the load, and therefore increase the scalability of the system, consists in organizing the participants in a logical tree, that can be used by the leader both to aggregate votes and to disseminate the consensus result. 
%By combining the tree topology with cryptographic schemes such as multi signatures\cite{byzcoin} or aggregated signatures\cite{bls}, each node in the tree may aggregate the votes of its children before propagating them to its parent.
%This allows votes to be collected and aggregated in parallel, as votes are propagated from the leaf to the leader at the root of the tree. 
%These cryptographic schemes allow the leader to validate the aggregated votes at a fraction of the cost of validating every vote individually.
Systems such as Byzcoin\cite{byzcoin} and Motor\cite{motor} leverage this design to offer better load balancing properties than the centralized approach proposed by HotStuff.

Unfortunately, the use of trees makes the system more sensitive to faults.
While in systems relying on a clique or star, such as PBFT or HotStuff, respectively, only the failure of the leader results in a reconfiguration or view change (under stable network conditions), in a tree a faulty internal node other than the leader also may prevent consensus from terminating.
In the former, upon failure of the leader, and if the network is stable, for $f$ faulty processes we need to perform at most $f+1$ reconfigurations to find a correct leader and hence ensure termination.
In contrast, in the latter, we need to find a tree with no faulty internal nodes, which may require a number of steps that is exponential with the system size~\cite{motor}.

%protocol inherently less robust, as the failure of a single internal node may prevent aggregation from terminating. Systems relying on a clique or star topology, such as PBFT or HotStuff, respectively, just have to perform at most $f+1$ reconfigurations to ensure termination (when the network is stable), for $f$ faulty nodes, which is optimal. However, in a tree topology, it is necessary to find a tree with no faulty internal nodes. This may require an exponential number of steps with respect to the system size~\cite{motor}.
Note that, while the reconfiguration takes place, the system is essentially halted which has a significant negative impact on the application. 
To mitigate the impact of a lengthy reconfiguration, systems such as Byzcoin or Motor~\cite{byzcoin,motor} fall back to a star topology after failing to terminate within a certain time, allowing the application to function in a degraded state.
While this might be tolerable for short periods, these approaches do not provide a strategy to reconstruct the tree and get back to the original performance and load balancing levels.
Given that in a large-scale system faults are the norm rather than the exception~\cite{failuresarecommon}, such deployments will inevitably fall back to a degraded state and stay there forever which is highly undesirable. In Byzcoin, for example, a single faulty node has a 50\% chance of leading the system into a degraded state.

In this paper, we propose \thesystem, the first communication abstraction for BFT consensus that organizes participants in a tree and, upon failures, guarantees termination after $f+1$ reconfigurations in the worst case, which is optimal. We precisely define which trees allow optimal reconfiguration, and discuss the tradeoffs between resilience, tree fan-out, and the number of reconfigurations required to ensure termination.

To leverage the capacity of non-leader processes and the parallelization opportunities made possible by a tree, \thesystem uses a novel pipelining scheme that allows nodes at different depths of the tree to aggregate votes for different consensus instances in parallel. As we show in the evaluation, this results in an up to 7x increase in throughput when compared to the non-pipelined version. We have implemented \thesystem and evaluated it under different realistic scenarios with up to 400 processes.
Results show that \thesystem outperforms the star topology in a range of different deployments, and outperforms HotStuff's throughput by up to 38x.

In short, the paper makes the following contributions:

\begin{itemize}
\item We present a set of abstractions that support the use of aggregation/dissemination trees in the context of consensus protocols;
\item We present a precise characterization of trees that support optimal reconfiguration steps;
\item We show how pipelining can be used to fully leverage the parallelisation opportunities offered by the trees;
%\item We present \thesystem, the first consensus protocol based on aggregation-trees that achieves termination with optimal number of reconfigurations;
\item We present \thesystem, the first tree- based communication abstraction for BFT consensus protocols that achieves termination in an optimal number of reconfigurations;
\item We present an extensive experimental evaluation of \thesystem in realistic scenarios with up to 400 nodes.

\end{itemize}

%The rest of this paper is organized as following. First, in Section~\ref{sec:relatedwork} we discuss the related work and how our approach compares to the state of the art. Section~\ref{sec:systemmodel} describes the system model and Section~\ref{sec:trees} starts outlining \thesystem. Next, Section~\ref{sec:reconfiguration} and Section~\ref{sec:pipelining} describe how the proposed system deals with the drawbacks of tree structures. Following that, Section~\ref{sec:implementation} describes the implementation of \thesystem that is evaluated in Section~\ref{sec:evaluation}. Section~\ref{sec:conclusion} concludes the paper.

\section{Related Work}
\label{sec:relatedwork}

The problem of Byzantine agreement was originally discussed in the context of synchronous systems\cite{byzGens} where it was solved using a recursive algorithm with factorial message complexity. Most Byzantine fault tolerant consensus protocols used today have been inspired by PBFT~\cite{pbft}. PBFT guarantees safety during asynchrony periods but requires synchronous phases to guarantee progress. In PBFT processes use an \textit{all-to-all} communication scheme which offers, for a small numbers of processes, excellent throughput and latency. However, due to the quadratic message cost it does not scale to large system sizes. Although there have been many proposals to extend and improve several aspects of PBFT (for instance, \cite{ebawa, Archer, bftsmart, spinning}) most preserve the same communication pattern.

Meanwhile, the recent interest in permissioned blockchains, where consensus plays a crucial role, and the need to scale such systems to thousands of processes\cite{sbft,libradocs}, fueled the research on alternative algorithms, with better scalability and load balancing properties.

HotStuff organizes processes in a star-topology to reduce message complexity\cite{hotstuff}. It requires four communication rounds where, in each round, the process at the center of the star plays the role of the leader that collects and disseminates votes to and from all other processes. While the protocol has a linear message complexity, it creates a bottleneck at the leader, that has to communicate with all processes and verify a quorum of $2f+1$ signatures each round. At the time of writing, the publicly available implementation of HotStuff uses \textit{secp256k1}~\cite{libsec}, a highly efficient elliptical curve algorithm, also used in Bitcoin, but against which several attack vectors have been found\cite{savecurves}. In this implementation, the leader is required to relay the full set of signatures to all processes. Alternatively, it is possible to use threshold signatures such as \textit{bls}~\cite{bls}, to reduce the message size at the expense of a significant increase in the computational load at the leader. For instance, \textit{bls}  signatures are up to 30 times slower to verify than \textit{secp256k1} signatures (see Section~\ref{sec:evaluation}). 

Another interesting feature of HotStuff is the use of a \textit{pipelining} optimization that allows to overlap different rounds of different consensus instances. Since consensus requires multiple rounds of communication, HotStuff can optimistically start the consensus for the next block before the consensus for the current one is terminated. This allows up to four consensus instances to run in parallel resulting in higher throughput.

Systems such as Steward~\cite{steward} and Fireplug~\cite{fireplug} organize processes in hierarchical groups resulting in low message complexity and better load balancing. However, these systems still rely on an all-to-all communication pattern in the multicast group positioned at the root of the hierarchy and, therefore, to ensure scalability, make restrictive assumptions regarding the distribution of faults.

Another alternative is to organize processes in a tree topology. Byzcoin~\cite{byzcoin} organizes processes in a binary tree where signatures (votes) are aggregated as they are relayed up the tree from the leaf nodes to the root, and the results are disseminated from the root to the leaves. While this design results in linear message complexity and has good load balancing properties, the failure of a single internal node may prevent termination. Byzcoin overcomes this difficulty by falling back to a clique topology (which has quadratic message complexity) when consensus fails.

Motor~\cite{motor} and Omniledger~\cite{omniledger} build upon the principles of Byzcoin, but rather than falling back immediately to a clique topology, rely instead on the use a scheduler that rotates the nodes in the subtrees until the leader is able to gather $N-f$ signatures. If the root process is unable to collect sufficient signatures, it contacts directly a random subset of  leaf processes, which in turn will attempt to collect votes from their siblings, until a quorum is obtained. Thus, in the worst case, assuming a fan-out of $m$, after $\frac{N}{m}$ steps the tree will have collapsed to a star topology. Further, in the presence of a faulty leader a total of $(f + 1) * \frac{N}{m}$ steps might be necessary until liveness is reestablished. For this to work, the tree must be limited to a depth of two, otherwise finding a tree that ensures liveness requires an exponential number of steps.

Strikingly, neither Motor nor Omniledger propose mechanisms to reconstruct the tree after falling back to the degraded topology. Thus, when faults occur, these systems lose their desirable communication properties permanently. In a large scale system, where failures are common~\cite{failuresarecommon}, this implies that those systems are more likely to run in a degraded state than in the optimal one. 

Another disadvantage of trees is the increased number of communication steps per round of communication which negatively impacts latency. If consecutive consensus instances are executed in serial order, this increase in latency also hurts throughput. This can be mitigated through pipelining techniques such as the ones proposed by HotStuff.

Table~\ref{tab_comparison} summarizes the properties of the systems described above. Byzcoin, Motor, and Omniledger are able to distribute the processing load, because signatures are aggregated on their path to the leader, such that the leader does not need to verify all signatures. For the same reason, these systems distribute bandwidth usage among the root and the internal nodes, such that there is no single process that has to communicate with every other process. Unfortunately, neither of them is able to converge to a configuration that ensures termination in an optimal number of reconfiguration steps. Also, none of these systems implements pipelining, which is key to mitigate the impact of the extra-latency on the system throughput.

\begin{table}[t]
\centering
{\scriptsize
\caption{Comparison of existing Algorithms}
\label{tab_comparison}
\begin{tabular}{|l|c|c|c|c|}
\hline
 & \textbf{Balances} & \textbf{Balances}    &  \textbf{$t+1$} & \textbf{Supports} \\ 
 & \textbf{processing} & \textbf{bandwidth}   &  \textbf{Reconfig.} & \textbf{Pipelining}\\  
 & \textbf{load} & \textbf{consumption}   &  \textbf{steps} &  \\ 
\hline
 HotStuff  & $\bullet$  & $\bullet$   & \checkmark & \checkmark \\ 
 Steward/ Fireplug & \checkmark  & \checkmark  & $\bullet$ & $\bullet$ \\ 
 Byzcoin\ & \checkmark   & \checkmark & $\bullet$  & $\bullet$ \\ 
 Motor & \checkmark  & \checkmark  & $\bullet$  & $\bullet$ \\
 Omniledger & \checkmark  & \checkmark  & $\bullet$  & $\bullet$ \\\hline
 \thesystem (this paper) & \checkmark  & \checkmark & \checkmark & \checkmark \\
\hline
\end{tabular}
}
\end{table}

In this paper, we propose \thesystem, a system aimed at overcoming the scalability limitations found in previous works. \thesystem owns the advantages of Byzcoin, Motor, and Omniledger, but without their disadvantages. Like these systems, \thesystem organizes the communication pattern on top of a tree to distribute both the processing load and the bandwidth utilization. In contrast to these systems, \thesystem implements a reconfiguration strategy that produces a robust tree in optimal (i.e. $f+1$) steps. We name this property Optimal Conformity and formally define it in Section~\ref{sec:trees}. Also, \thesystem leverages pipelining to hide the latency effects induced by the tree. Instead of proposing our own consensus algorithm, \thesystem can be plugged into any leader-based consensus algorithm to replace the communication scheme resulting in increased throughput and improved load balancing.

\section{System Model}
\label{sec:systemmodel}

The system is composed of a set of $N$ server processes $\{p_1, p_2, .. p_N\}$ and a set of client processes $\{c_1, c_2, .. c_m\}$. Client and server processes are connected through perfect point-to-point channels (constructed by adding mechanisms for message re-transmission as well as detecting and suppressing duplicates~\cite{cachinbook}). We also assume the existence of a Public Key Infrastructure used by processes to distribute the keys required for authentication and message signing.  Moreover, processes may not change their keys during the execution of the protocol and require a sufficiently lengthy approval process to re-enter the system to avoid rogue key attacks~\cite{roguekey}.

We assume the Byzantine fault model, where at most $f \leq \frac{N-1}{3}$ faulty processes may produce arbitrary values, delay or omit messages, and collude with each other, but do not possess sufficient resources to compromise cryptographic primitives. 

To circumvent the impossibility of consensus\cite{flp}, we assume the partial synchrony model~\cite{dls}. In this model, there may be an unstable period, where messages exchanged between correct processes may be arbitrarily delayed. Thus, while the network remains unstable, it is impossible to distinguish a faulty process from a slow one. However, there is some known bound $\Delta$ on the worst-case network latency and an unknown Global Stabilization Time (GST), such that after GST, all messages between correct processes arrive within $\Delta$.

\section{From a Star to a Tree Topology}
\label{sec:trees}

Instead of designing a completely new consensus algorithm from scratch, we developed \thesystem as an adaptation of HotStuff. The key idea is to replace the dissemination and aggregation patterns used by HotStuff, which  are based on a star topology, by new patterns based on  trees.
While our presentation hinges on HotStuff characteristics, our principles could be applied to any other leader-based consensus algorithm.

\subsection{HotStuff Communication Pattern}

Due to space constraints, it is not possible to provide a detailed description of HotStuff. For self-containment, we provide a brief high-level description of its operation. We give emphasis on the communication pattern used in HotStuff and discuss how this pattern may be abstracted, such that it can be replaced by a more scalable implementation. HotStuff reaches consensus in four communication rounds. Each round consists of two phases: i) a dissemination phase where the leader broadcasts some information to all processes; and ii) an aggregation phase where the leader collects and aggregates information that it received from a quorum of replicas. All rounds follow the same exact pattern, but the information sent and received by the leader in each round differs:

\fakeparagraph{First round} In the dissemination phase, the leader broadcasts a block proposal to all processes. In the aggregation phase, the leader collects a \textit{prepare} quorum of $2f+1$ signatures of the block. The signatures convey that  the replicas have validated and accepted the block proposed by the leader. 

\fakeparagraph{Second round} In the dissemination phase, the leader broadcasts the \textit{prepare} quorum, either as a collection of individual signatures or as a single combined signature (when threshold or aggregated signatures are used). In the aggregation phase, the leader collects the \textit{pre-commit} quorum, a quorum of signatures which imply that the processes have validated the \textit{prepare} quorum. If the leader is able to collect a quorum, the value proposed by the leader is \textit{locked} and will not be changed, even if the leader is suspected and a reconfiguration occurs.

\fakeparagraph{Third round} In the dissemination phase, the leader broadcasts the \textit{pre-commit} quorum. In the aggregation phase, the leader collects a \textit{commit} quorum, a quorum of signatures implying that processes have validated the \textit{pre-commit} quorum. If the leader is able to collect the quorum, the value is decided.

\fakeparagraph{Forth round} In the last round, the leader disseminates the \textit{commit} quorum in a decision message to all processes, which then verify it and decide accordingly.

\begin{figure}
	\centering
	\subfloat[Topology\label{fig:examplestar}]{\includegraphics[width=0.24\linewidth]{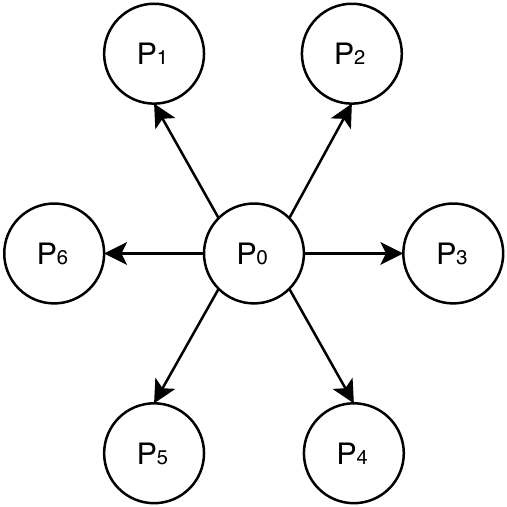}}\,\,
	\subfloat[Communication pattern\label{fig:threephasehotstuf}]{\includegraphics[width=0.5\linewidth]{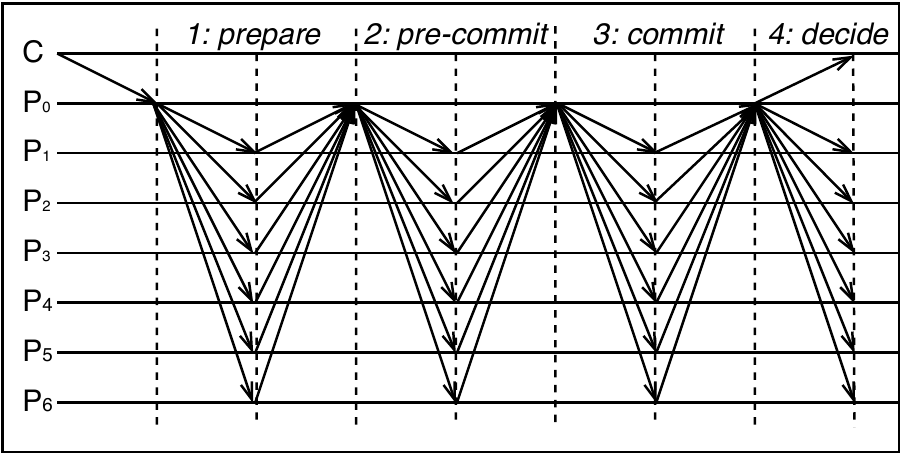}}
    \caption{HotStuff communication pattern with seven processes.}
    \label{fig:star}
\end{figure}

\vspace{2mm}Figure~\ref{fig:star} illustrates the communication pattern of HotStuff in a system with 7 processes. We abstract the communication pattern used in HotStuff with the following two primitives:

\begin{itemize}
\item \textsl{disseminate-to-followers(data)}. This primitive is invoked by the leader; it is used in the first phase of each round to send \textsl{data} to all other processes. 
\item \textsl{aggregate = aggregate-at-leader(input$_i$, input$_j$, $\ldots$)}. This primitive is invoked by every non-faulty process $i$ that received \textsl{data} from the leader in the first phase. The input by process $i$ consist of a cryptographic signature of \textsl{data}. The primitive returns an aggregate of the input signatures to the leader, containing at least $2f+1$ input values. 
\end{itemize}

While HotStuff uses one implementation of the above primitives, other implementations are possible as long as they offer the same guarantees. To characterize the guarantees provided by the communication pattern of HotStuff, we first define the notion of a \textit{robust configuration}. We say that the primitives are executed in a robust configuration if the leader is correct and in a non-robust configuration if the leader is faulty. Using this notion, we can capture the properties of the \textsl{disseminate-to-followers} and \textsl{aggregate-at-leader} primitives, respectively:

\begin{definition} 
\textbf{Reliable Dissemination:} After the Global Stabilization Time, and in a robust configuration, all correct processes deliver the \textsl{data} sent by the leader.
\end{definition}

\begin{definition} 
\textbf{Fulfilment:} After the  Global  Stabilization Time, and in a robust configuration, the aggregate collected by the leader includes at least $2f+1$ signatures.
\end{definition}

If the configuration is not robust, the primitives offer no guarantees. In particular, some nodes may receive the same \textsl{data} from the leader and others may receive different or no \textsl{data} at all. The aggregate arriving at the leader may include less than $2f+1$ valid signatures, even if the leader sends consistent and valid \textsl{data} to all processes.
Note that, before the Global Stabilization Time is reached, it may be impossible to distinguish a robust from a non-robust configuration, \emph{i.e.} whether the leader is correct or faulty.

\subsection{Using Trees to Implement HotStuff}

We now discuss how to implement the \textsl{disseminate-to-followers} and \textsl{aggregate-at-leader} primitives using tree topologies, while preserving the same properties. As noted before, nodes are organized in a tree with the leader at the root. The primitive \textsl{disseminate-to-followers} is implemented by having the root send \textsl{data} to its children, that in turn, forward it to their own children, and so forth. The primitive \textsl{aggregate-at-leader} is implemented by having the leaf nodes send their signatures to their parent. The parent then aggregates those signatures with its own and sends the aggregate to their own parent. This process is repeated until the final aggregate is computed at the root of the tree. This process is illustrated in Figure~\ref{fig:tree}.

\begin{figure}
	\centering
	\subfloat[Topology\label{fig:exampletree}]{\includegraphics[width=0.24\linewidth]{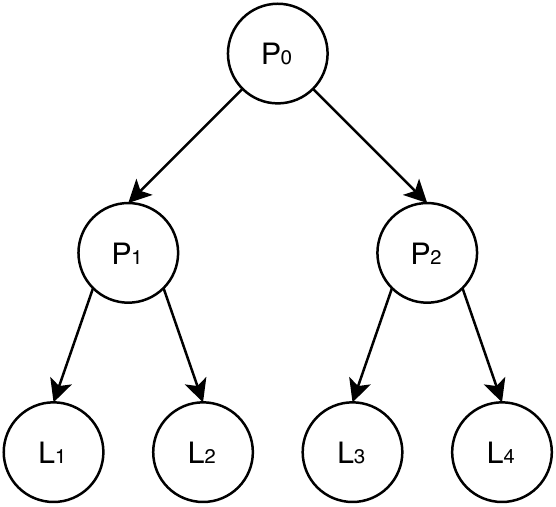}}\,
	\subfloat[Communication pattern\label{fig:threephase}]{\includegraphics[width=0.74\linewidth]{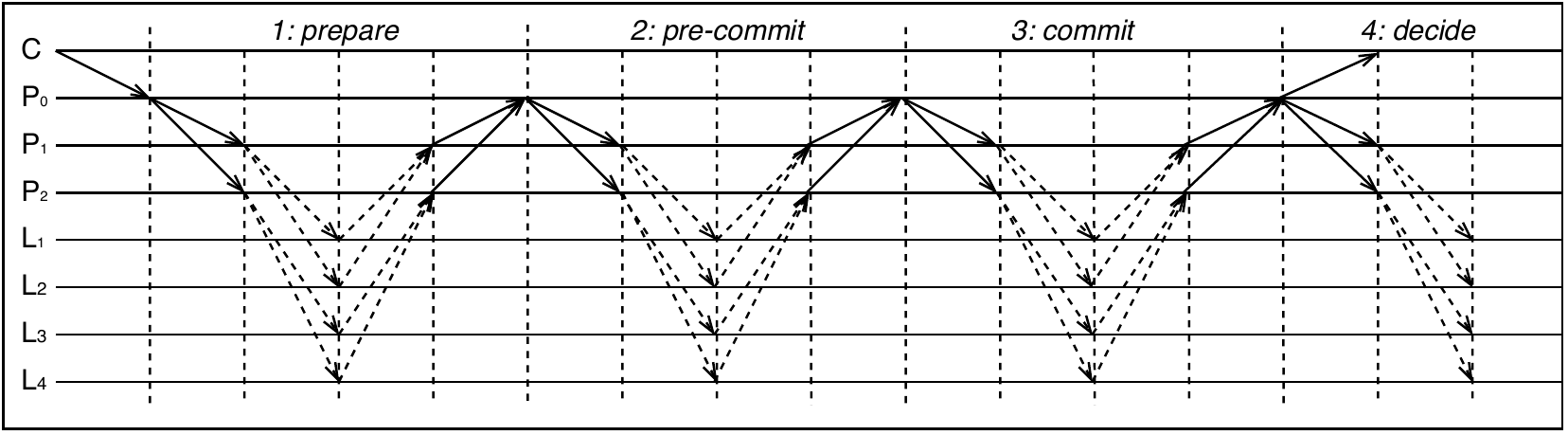}}
    \caption{Tree communication pattern for seven processes.}
    \label{fig:tree}
\end{figure}

When using a tree to implement \textsl{disseminate-to-followers} and \textsl{aggregate-at-leader}, the notion of robust configuration needs to be adapted. In fact, in a tree, it is not enough that the leader is non-faulty to make the configuration robust. Instead, the communication graph (in our case, the communication tree), must be robust. We define a \textit{robust graph} as follows.

\begin{definition}
\textbf{Robust Graph:} An edge is said to be \textit{safe} if the corresponding vertices are both correct processes. A graph is robust iff the leader process is correct and, for every pair of correct process $p_i$ and $p_j$, the path in the tree connecting these processes is composed exclusively of safe edges\label{def:robust}.
\end{definition}

\subsection{Achieving Reliable Dissemination and Fulfillment}

In the following we describe how to implement \textsl{disseminate-to-followers} and \textsl{aggregate-at-leader} primitives on a tree such that Reliable Dissemination and Fulfillment are satisfied. We start by describing the communication primitives used to propagate information on the tree and the cryptographic primitives used to perform aggregation.

\subsubsection{Communicating on the Graph}
Processes use the tree to communicate. 
%Each edge in the graph corresponds to a perfect point-to-point channel that is used by neighbor processes to communicate.  
Each directed edge maps to a perfect single-use point-to-point channel, to send and deliver a single value. Perfect point-to-point channels expose a \textsc{send} primitive used to send a message, and a \textsc{deliver} primitive used to receive a message, and exhibit the following properties.

\begin{itemize}
    \item \emph{Validity:} If a process $p_j$ delivers a value $v$ on a channel over an edge $e_{ij}$, $v$ was sent by $p_i$.
    \item \emph{Termination:} If both $p_i$ and $p_j$ are correct, if $p_i$ invokes \textsc{send} then eventually $p_j$ delivers $v$.
\end{itemize}

%When using the overlay to aggregate information, nodes wait for their children to propagate data upstream. Conversely, when the overlay is used to dissiminate information, nodes wait for their parents to propagate downstream. For liveness, the aggregation/dissemination process must proceed even in the occurrence of faults. 
Note that, when using perfect channels, a message is only guaranteed to be eventually delivered if both the sender and the recipient are correct. If the sender is faulty, no message may ever be delivered. To avoid blocking, a process should be able to make progress if a message takes too long to be received. This behaviour is captured by an abstraction we call \textit{impatient channels}. Impatient channels offer a blocking \textsc{receive} primitive that always returns a value: either the value sent by the sender, or a special value $\bot$ if the sender is faulty or the system is unstable. After the Global Stabilization Time (GST)\cite{dls}, if the sender and the receiver are correct, the receiver always receives the value sent. Impatient channels have the following properties:

\begin{itemize}
    \item \emph{Validity:} 
    %If a process $p_j$ delivers a value $v$,  $v$ was sent by process $p_i$ or $v=\bot$.
    If a process $p_j$ delivers a value $v$ on a channel over an edge $e_{ij}$, $v$ was sent by $p_i$ or $v=\bot$.

    \item \emph{Termination:}  If a correct process $p_j$ invokes \textsc{receive}, it eventually returns some value.
    \item \emph{Conditional Accuracy:} Let $p_i$ and $p_j$ be correct sender and receiver processes, respectively. After GST, $p_j$ always return the value $v$ sent by $p_i$.
\end{itemize}

Algorithm~\ref{algo:impchannels-receive} shows how impatient channels can be implemented on top of perfect channels using the known bound $\Delta$ on the worst-case network latency.

\setlength{\textfloatsep}{0pt}
\begin{algorithm}[t]
\scriptsize
\centering
\caption{Impatient Channels: implementation of \textsc{receive}}
\label{algo:impchannels-receive}
\begin{algorithmic}[1]
\State{let \textit{ic} be an impatient channel built on top of perfect channel \textit{pc}}
\Function{\textit{ic}\textsc{.receive}}{ }
\State{\textsc{timer.start}($\Delta$)}
\State{\textbf{when} \textit{pc}.\textsc{deliver} ($v_i$) \textbf{do return} $v_i$}
\State{\textbf{when} \textsc{timer.timeout}() \textbf{do return} $\bot$}
\EndFunction
\end{algorithmic}
\end{algorithm}
\setlength{\textfloatsep}{5pt}

\subsubsection{Cryptographic Collections}

In each round of consensus, it is necessary to collect a Byzantine quorum of votes. The collection and validation of these votes can be an impairment for scalability. \thesystem mitigates these costs by using the tree to aggregate votes as they are forwarded to the leader. We model the process of vote aggregation with an abstraction called a \textit{cryptographic collection}, that corresponds to a secure multi-set of tuples $(p_i, v_i)$. A process $p_i$ can create a new collection $c$ with a value $v_i$ by calling $c$=\textsc{new}($(p_i, v_i)$). Processes can also merge two collections using a \textsc{combine} primitive,  $c_{12}$ = \textsc{combine}$(c_1, c_2)$, denoted by $c_{12} = c_1 \oplus c_2$. A process can also check if a collection $c$ includes at least a given threshold of $t$ distinct tuples with the same value $v$, by calling \textsc{has}$(c, v, t)$. Finally, it is possible to check the total amount of input tuples that were contributed to $c$ by checking its cardinality $\vert c \vert$. Cryptographic collections have the following properties:

\begin{itemize}
    \item \emph{Commutativity:}   $c_1 \oplus c_2 = c_2 \oplus c_1$.
    \item \emph{Associativity:}   $c_1 \oplus (c_2 \oplus c_3)  = (c_1 \oplus c_2) \oplus c_3$.
    \item \emph{Idempotency:} $c_1 \oplus c_1 = c_1$.
    \item \emph{Integrity:} Let $c= c_1 \oplus \ldots c_i \ldots c_n$. If \textsc{has} ($c, v, t$) then at least $t$ distinct processes $p_i$ have executed $c_i$=\textsc{new}($(p_i, v)$).
\end{itemize}

Note that different cryptographic techniques can be used to implement these collections. In \thesystem, we leverage a non-interactive \textit{bls} cryptographic aggregation scheme that allows each internal node to aggregate the votes from its children into a single aggregated vote~\cite{bls}. The complexity of verifying an aggregated vote is $\mathcal{O}(1)$, thus the burden imposed on each internal node (including the root) is $\mathcal{O}(m)$, where $m$ is the fanout of the tree.  Note that classical asymmetric signatures require $\mathcal{O}(N)$ verifications at every process and threshold signature schemes require $\mathcal{O}(N)$ at the root and $\mathcal{O}(1)$ at all the other processes~\cite{bls}.

\subsubsection{Implementing disseminate-to-followers}

Algorithm~\ref{algo:disseminate} presents the implementation of \textsl{disseminate-to-followers} on a tree. 
Despite being very simple, note that the algorithm always terminates, even if some intermediate nodes are faulty. This is guaranteed since impatient channels always return a value after the known bound $\Delta$ on the worst-case network latency, either the \textsl{data} sent by the parent or the special value $\bot$.

\setlength{\textfloatsep}{0pt}
\begin{algorithm}[t]
\scriptsize
\centering
\caption{\textsl{disseminate-to-followers} on a tree $G$ (process $p_i$)}
\label{algo:disseminate}
\begin{algorithmic}[1]
\Procedure{\textsl{disseminate-to-followers}}{$G$, \textsl{data}}
\State{$\textit{children} \gets $ $G$\textsc{.children}($p_i$)} \Comment{Get edges to children of $p_i$}
\State{$\textit{parent} \gets $ $G$\textsc{.parent}($p_i$)} \Comment{Get parent of $p_i$ (returns $\bot$ for root)}
\If{ $\textit{parent} \neq \bot$}
%\For{\textbf{all} $\textit{e} \in \textit{parent}_i$} %\Comment{Empty for root}
    \State {\textsc{channel.receive}(\textit{parent}, \textsl{data})}
    \Comment{Receive from parent}
%\EndFor

\EndIf
\For{\textbf{all} $\textit{e} \in \textit{children}$} \Comment{Send to children}
    \State {\textsc{channel.send}($e$, $\textit{data}$)}
\EndFor
\State{\textbf{return} $\textsl{data}$}
\EndProcedure
\end{algorithmic}
\end{algorithm}
\setlength{\textfloatsep}{5pt}

\begin{theorem}
Algorithm~\ref{algo:disseminate} guarantees Reliable Dissemination.
\end{theorem}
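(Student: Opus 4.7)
The plan is to prove Reliable Dissemination by induction on the depth of correct processes in the tree $G$, using the fact that in a robust configuration the path from the root to any correct process is composed entirely of correct nodes.

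I would start by unpacking Definition~\ref{def:robust} applied to this setting: in a robust tree, the root $p_r$ is correct, and for every correct process $p$, the unique tree path from $p_r$ to $p$ uses only safe edges, so every ancestor of $p$ in the tree is correct. This is the structural fact that makes the induction go through, since it guarantees that the chain of parent/child pairs carrying the data down to $p$ consists entirely of correct senders and receivers.

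Next I would set up the induction on the depth $d$ of a correct process in $G$. For the base case $d=0$, the root is correct by robustness, already has \textsl{data} as a parameter to the procedure, and falls through the \textbf{if} guard (since \textsc{parent} returns $\bot$), so it directly delivers \textsl{data} and proceeds to send it to its children. For the inductive step, let $p$ be a correct process at depth $d>0$ with parent $p'$. By the structural observation above, $p'$ is correct, and by the induction hypothesis $p'$ eventually delivers \textsl{data} and then executes \textsc{channel.send} to each of its children, including $p$. Because the hypothesis of the theorem places us after GST, and both $p'$ and $p$ are correct, the Conditional Accuracy property of impatient channels guarantees that $p$'s call to \textsc{channel.receive} returns exactly the value sent by $p'$, namely \textsl{data}. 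Thus $p$ delivers \textsl{data} as well, closing the induction.

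The main obstacle I expect is the subtlety introduced by impatient channels: they may return $\bot$ for a correct sender/receiver pair before GST, and Algorithm~\ref{algo:disseminate} performs a single \textsc{receive} with no retransmission. The clean way to handle this is to observe that the Reliable Dissemination property is stated \emph{after} GST and in a robust configuration, so the relevant executions are those in which the root's invocation propagates through correct edges after GST; in that regime Conditional Accuracy at each hop is exactly what the inductive step needs. A brief remark to this effect is all that is required to discharge the timing issue and complete the argument.
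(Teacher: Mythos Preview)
Your proposal is correct, but it proceeds differently from the paper. The paper argues by contradiction: it assumes some correct process fails to receive \textsl{data}, lists three possible causes (no path of correct processes to the leader, some correct intermediary skipping \textsc{channel.send}, or the channel losing the message), and rules each out using the Robust Graph definition and the underlying perfect channels. You instead give a direct proof by induction on the depth of a correct process, using robustness to guarantee that every ancestor is correct and then invoking Conditional Accuracy of impatient channels at each hop after GST. Your route is slightly more explicit about the timing and channel semantics (you name Conditional Accuracy rather than appealing to perfect channels), while the paper's case analysis is shorter but glosses over the impatient-channel layer; both reach the same conclusion with essentially the same ingredients.
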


\begin{proof}
We prove this by contradiction. Assume Reliable Dissemination is not guaranteed.
This implies that at least one correct process did not receive the data sent by the leader. This is only possible if: i) at least one correct process is not connected to the leader either directly or through intermediary correct processes ii) one of the intermediary processes or the root process did not invoke \textsc{channel.send} for at least one correct child process, or iii) the data got lost in the channel.
By definition, Reliable Dissemination requires a robust configuration which following the definition of a Robust Graph ensures that the leader is correct and there is a path of correct processes between the leader and any other correct process. Thus, the first case is not possible.
Moreover, correct processes follow the algorithm and, because correct processes can only have correct parents in a robust configuration, the second case is also impossible. Finally, the third case is also impossible due the perfect channels.
%However, that directly leads to a contradiction. Since a robust graph is assumed, from 
%that the leader process must be correct and there must be a path of correct processes between the correct leader and any other correct process. Additionally, we assume perfect communication channels making the loss of data impossible.
Therefore, Algorithm~\ref{algo:disseminate} guarantees Reliable Dissemination.
\end{proof}

\subsubsection{Implementing aggregate-at-leader}

Algorithm~\ref{algo:aggregate} presents the implementation of \textsl{aggregate-at-leader} on a tree. The algorithm relies on the cryptographic primitives to aggregate the signatures as they are propagated towards the root. Like \textsl{disseminate-to-followers}, \textsl{aggregate-at-leader} always terminates, even if some intermediate nodes are faulty. This is guaranteed because impatient channels always return a value after the known bound $\Delta$ on the worst-case network latency, either the \textsl{data} sent by the child processes or the special value $\bot$. However, before GST or due to a non-robust configurations, the collection returned at the leader may be empty or include just a subset of the required signatures.

\setlength{\textfloatsep}{0pt}
\begin{algorithm}[h!]
\scriptsize
\centering
\caption{\textsl{aggregate-at-leader} on a tree $G$ (process $p_i$)}
\label{algo:aggregate}
\begin{algorithmic}[1]
\Procedure{\textsl{aggregate-at-leader}}{$G$, \textsl{input}}
\State{$\textit{children} \gets $ $G$\textsc{.children($p_i$)}} \Comment{Get edges to children of $p_i$}
\State{$\textit{parent} \gets $ $G$\textsc{.parent}($p_i$)}\Comment{Get parent of $p_i$ (returns $\bot$ for root)}
\State{$\textit{collection} \gets \textsc{new}((p_i, \textsl{input}))$}
\For{\textbf{all} $\textit{e} \in \textit{children}$} \Comment{Empty for leaf nodes}
    \State {\textsc{channel.receive}($e$, \textit{partial})} \label{algo:aggregate-receive}
    \State {$\textit{collection} \gets \textit{collection} \oplus \textit{partial}$}
\EndFor
\If{ $\textit{parent} \neq \bot$}
%\For{\textbf{all} $\textit{e} \in \textit{parent}_i$} \Comment{Send to  parent, if not the root}
    \State {\textsc{channel.send}(\textit{parent}, $\textit{collection}$)} \label{algo:aggregate-send}
\EndIf
%\EndFor
\State{\textbf{return} $\textit{collection}$}
\EndProcedure
\end{algorithmic}
\end{algorithm}
\setlength{\textfloatsep}{5pt}

\begin{theorem}
Algorithm~\ref{algo:aggregate} guarantees Fulfilment.
\end{theorem}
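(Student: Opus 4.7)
The plan is to prove Fulfilment by structural induction on the tree, showing that in a robust configuration after GST the collection held by every correct process $p_i$ at the point it invokes \textsc{channel.send} contains a distinct tuple $(p_j,\textsl{input})$ for every correct process $p_j$ in the subtree rooted at $p_i$. Applying this at the root then yields at least $N-f \geq 2f+1$ signatures, as required.

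First I would unpack the hypothesis. By Definition~\ref{def:robust}, a robust graph has a correct root and, for every correct $p_j$, the tree-path from $p_j$ to the root consists exclusively of safe edges; hence every ancestor of a correct process is itself correct, and every correct process whose parent exists has a correct parent. This is the structural fact that enables the induction to ``pass through'' all the correct portions of the tree.

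The induction is on the height of the subtree rooted at a correct $p_i$. For the base case (a correct leaf), the returned collection is $\textsc{new}((p_i,\textsl{input}))$, which by definition contains $p_i$'s tuple. For the inductive step, consider each edge $e$ from $p_i$ to a child $c$. If $c$ is correct, then $e$ is safe and both endpoints are correct, so by Conditional Accuracy of impatient channels, after GST line~\ref{algo:aggregate-receive} returns exactly the \textit{partial} that $c$ sent at line~\ref{algo:aggregate-send}; by the induction hypothesis this \textit{partial} contains a tuple for every correct descendant of $c$. If $c$ is faulty, \textsc{channel.receive} still terminates by Termination of impatient channels, returning either $\bot$ or some adversarial value, but this can only add spurious entries and, by Integrity of cryptographic collections, cannot forge tuples for correct processes. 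Using Associativity, Commutativity, and Idempotency of $\oplus$, merging these partials with $p_i$'s own tuple yields a collection containing a distinct tuple for every correct process in $p_i$'s subtree. Since the root is correct by robustness, the root's final collection contains tuples for all correct processes in the system, at least $N-f\geq 2f+1$ of them, establishing Fulfilment.

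The main obstacle I anticipate is the treatment of faulty children: one must argue that adversarial or $\bot$ contributions cannot \emph{erase} previously accumulated valid tuples, nor fraudulently inflate the count of distinct correct contributors. I would handle both concerns by leaning explicitly on Integrity (which guarantees every counted tuple traces back to a real \textsc{new} call by that process) and on the monotonicity implicit in $\oplus$ being a set-like merge, so that $p_i$'s collection can only grow as children are processed. Everything else is essentially a careful chaining of the impatient-channel properties along the correct subgraph carved out by the robust-graph assumption.
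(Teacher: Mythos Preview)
Your proof is correct, but it is structured differently from the paper's. The paper argues by contradiction: it assumes the root fails to collect $2f+1$ signatures and then observes that this would require either (i) some internal node not receiving the input of a correct child, or (ii) some internal node failing to aggregate and forward; both are ruled out in a robust graph after GST because every internal node on a root-to-correct-process path is correct and impatient channels always return. Your argument is a direct structural induction on subtrees, establishing the explicit invariant that each correct node's outgoing collection already contains the tuples of all correct descendants. The two proofs rest on the same underlying facts (robustness forces ancestors of correct nodes to be correct; Conditional Accuracy delivers correct children's partials; Termination prevents blocking on faulty children), but your decomposition is more explicit about how faulty subtrees are handled and about why the $\oplus$-merge cannot lose or fabricate tuples (via Integrity and the monotonicity of set-like union). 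The paper's contradiction is terser; your induction is more constructive and makes the $N-f\geq 2f+1$ counting step transparent.
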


\begin{proof}
We prove this by contradiction. Assume that the leader process was unable to collect $2f+1$ signatures. 
Following Algorithm~\ref{algo:aggregate}, this means that either: i) an internal node did not receive the signatures from all correct children (line~\ref{algo:aggregate-receive}), ii) or an internal node did not aggregate and relay the signatures (line~\ref{algo:aggregate-send}). Since we assume impatient channels, that are implement on top of perfect point-to-point channels,  the first case is not possible. The second case may happen, if either the internal node is not a correct process and omits signatures in the aggregate, if it does not relay any signatures, or if it is waiting indefinitely for all answers of its child processes.
Either option leads to a contradiction. Since we assume a robust graph, all internal nodes between the root and a correct process must be correct and hence follow the protocol. Additionally, due to the impatient channels, eventually each channel will return a value to the internal node making sure that eventually it will receive an answer from all child processes (correct or not) allowing it to relay all collected signatures from all correct child processes.
Therefore Algorithm~\ref{algo:aggregate} guarantees Fulfilment.
\end{proof}

\subsubsection{Challenges of Using a Tree}

With the implementation of \textsl{disseminate-to-followers} and \textsl{aggregate-at-leader} introduced above, it is possible to have a tree topology that offers the same properties as the star topology of HotStuff.
%when running in a robust configuration, the tree topology offers the same guarantees as the star-pattern used in HotStuff. 
However, two remaining challenges need to be addressed to make the tree topology a valid alternative:

\fakeparagraph{Reconfiguration strategy} In HotStuff, the configuration is robust if the leader is non-faulty. Therefore, there are only $f$ non-robust configurations and $N-f$ robust configurations. 
It is thus trivial to devise a 
a reconfiguration strategy that yields a robust configuration in optimal  steps (\emph{i.e. $f+1$}). When using a tree, a configuration is only robust if the root and all internal nodes are correct. 
The total number of configurations and, the subset of non-robust configurations, is extremely large. A reconfiguration strategy that is able to guide the system to a robust configuration in a small number of steps is therefore non-trivial.

\fakeparagraph{Mitigate the impact of the increased latency} While trees allow to distribute the load among all processes, it comes at the cost of increased latency of the \textsl{disseminate-to-followers} and \textsl{aggregate-at-leader} primitives which, in turn, may negatively affect the throughput of the system. Mitigating these effects requires additional mechanisms.

\vspace{2mm}In the next sections, we discuss how \thesystem addresses the two challenges above.

\section{Reconfiguration}
\label{sec:reconfiguration}

We now discuss \thesystem's reconfiguration strategy. A reconfiguration is necessary when, due to faults or due to an asynchronous period, the current configuration is deemed to be not robust.
Every reconfiguration results in a new tree that processes use to communicate. Naturally, not all the trees are robust and several reconfigurations might be necessary before a robust tree is found. 

\subsection{Modeling Reconfiguration as an Evolving Graph}

We model the sequence of trees resulting from the reconfigurations as an evolving graph, i.e., a sequence of static graphs (that are trees). To ensure that eventually a robust graph is used, the evolving graph must hold the following property:

\begin{definition} 
\textbf{Recurringly Robust Evolving Graph:} An evolving graph $\mathcal{G}$ is said to be \textit{recurringly robust} iff a robust static graphs appear infinitely often in its sequence. \label{def:recurringlyrobust}
\end{definition}

A recurringly robust evolving graph is sufficient to ensure that eventually a robust graph will be used by processes to communicate. However, this is undesirable in practice because the number of reconfigurations until a robust graph is found is unbounded. Because the system's process is essentially halted during reconfiguration, we would like to find a robust graph after a small number $t$ of reconfigurations. We call this property of an evolving graph \textit{$t$-bounded conformity}. 

\begin{definition}
\textbf{$t$-Bounded Conformity}: a recurringly robust evolving graph $\mathcal{G}$ exhibits \textit{$t$-bounded conformity} if a robust static graph appears in $\mathcal{G}$ at least once every $t$ consecutive static graphs.\label{def:boundedconformity}
\end{definition}

Interestingly, for general topologies it is hard to build a reconfiguration strategy that construct an evolving graph that satisfies $t$-bounded conformity for small values of $t$. Consider, for instance, an evolving graph where all static graphs are binary trees. The number of possible binary trees is given by the Catalan number $C_N = \frac{(2N)!}{((N + 1)!N!}$. From all these trees, only a small fraction is robust, namely those where faulty processes are not internal nodes\footnote{A tree with a faulty internal node might be robust if all its children are also faulty but, for simplicity, we omit these cases in the analysis.}.  
Thus, a naive evolving graph that simply permutes among all possible configurations may require a factorial number of steps to find a robust graph.

Ideally, we want an upper bound on the number of reconfiguration steps that grows linearly either with the system size $N$ or the number of faults $f$. Note that, in the worst case, any leader-based protocol may require at least $f+1$ reconfigurations, because a faulty leader may always prevent consensus from being reached. This leads us to the notion of \emph{Optimal Conformity}.

\begin{definition}
\textbf{Optimal Conformity}: a recurringly robust evolving graph $\mathcal{G}$ fulfills Optimal conformity iff $\mathcal{G}$ observes $(f+1)$-bounded conformity.
\label{def:optimalconformity}
\end{definition}

We now discuss how to achieve Optimal Conformity in tree topologies. To make the problem tractable, we restrict ourselves to a balanced tree topology with a fixed fanout. In this case, for a tree of $N$ nodes, of which $I$ are internal nodes (including the root), there are ${N \choose I}$ different relevant  assignments of processes to internal nodes in the tree. Of these assignments, only a small fraction yields a robust tree. This fraction is given by: 

{\small
\begin{equation}
\frac{(N-f)!(N-I)!}{(N-f-I)!N!}
\end{equation}\label{eq:pk1}
}
Therefore, if the reconfiguration strategy selects trees at random, the probability of obtaining a robust tree is very small. For example, when ${N\to\infty}$, for $I=4$ the probability to obtain a robust graph, is  $\approx 20\%$ and for $I=10$ it is  only $\approx 1.7\%$. This is the main reason why the state of the art approaches that rely on trees degrade to a star or clique topology upon failures. Next, we introduce a reconfiguration strategy that addresses this problem.

\subsection{Building an Evolving Tree with Optimal Conformity} 

Our construction is based on the following key insight: if we split all processes into $f+1$ disjoint bins then at least one of these bins is guaranteed to have only correct processes ((as there are at most $f$ faulty processes)). By construction, a tree whose internal nodes are exclusively drawn from this bin is guaranteed to be robust. For this strategy to work, we must have $\frac{N}{I} \geq f + 1$, i.e. the number of processes must be sufficient to fill $f+1$ bins of size $I$. Algorithm~\ref{algo:robustGraph} captures the steps required to construct an evolving tree that satisfies Optimal Conformity. It starts by dividing the set of processes in 
$f+1$ disjoint bins, each with at least $I$ nodes. Each static graph $G^k$ is built by picking a bin $B^i$, following a round robin strategy, and by assigning nodes from bin $B^i$ to all internal nodes of the tree (including the root). 

\begin{algorithm}[!t]
\scriptsize
\centering
\caption{Construction of an Evolving Tree with Optimal Conformity ($\frac{N}{I} \geq f + 1$)}
\label{algo:robustGraph}
\begin{algorithmic}[1]
\State{Initially, split the set of processes $\mathcal{N}$ into  disjoint bins}
\State{$\mathcal{N} \gets B^0 \cup B^1 \cup \ldots \cup B^{\frac{N}{I}}$; s.t. $|B^i| \geq f+1 \land B^i \cap B^j = \emptyset$. }
\Function{build}{$k$} 
\State{$i \gets k\;\textbf{mod}\;I$}\label{alg:e-tree:iteration}
\State{$\mathcal{G}^i \gets$ all possible trees whose internal nodes are drawn exclusively from $B^{i}$. }
\State{$G^k \gets$~pick any tree from $\mathcal{G}^i$}
\State{\textbf{return}($G^k$)}\label{alg:e-tree:return}
\EndFunction
\end{algorithmic}
\end{algorithm}

\begin{theorem}
\label{th:generictree}
Algorithm~\ref{algo:robustGraph} constructs an evolving graph that satisfies Optimal Conformity if $\frac{N}{I} \geq f + 1$.
\end{theorem}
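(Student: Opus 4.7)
The plan is to combine a pigeonhole argument on the bins with the round-robin bin selection to show that a robust tree is produced at least once every $f+1$ iterations. The condition $\tfrac{N}{I} \geq f+1$ ensures that we can actually form $f+1$ disjoint bins, each large enough to supply every internal position of the tree.

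First I would observe that, since there are at most $f$ faulty processes distributed among $f+1$ disjoint bins, by the pigeonhole principle there exists at least one bin $B^{\star}$ whose members are all correct. Next, using the round-robin selection in line~\ref{alg:e-tree:iteration} (reading the iterator modulo the number of bins), I would note that for any window of $f+1$ consecutive indices $k, k+1, \ldots, k+f$ the function \textsc{build} returns a tree drawn from $\mathcal{G}^{\star}$ at least once, because each bin index is visited exactly once per cycle.

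I would then argue that any tree $G^{k}$ with internal nodes drawn exclusively from $B^{\star}$ is robust in the sense of Definition~\ref{def:robust}. Indeed, the root is an internal node and is therefore correct, satisfying the leader condition. For any two correct processes $p_i$ and $p_j$, the unique tree path between them traverses only their respective ancestors up to a common ancestor; every intermediate vertex on this path is an internal node of the tree and therefore belongs to $B^{\star}$, hence is correct. Each edge on the path thus connects two correct vertices (either two internal nodes from $B^{\star}$, or an internal node from $B^{\star}$ with the correct endpoint $p_i$ or $p_j$), so every edge on the path is safe.

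Combining these observations, a robust static graph appears at least once in every $f+1$ consecutive graphs of the evolving sequence produced by Algorithm~\ref{algo:robustGraph}, which is exactly $(f+1)$-bounded conformity and hence Optimal Conformity by Definition~\ref{def:optimalconformity}. The one subtle step I expect to be the main obstacle is the robustness verification: one must be careful that ``internal nodes are correct'' is enough, since correct leaves cause no issue (edges incident to a correct leaf from a correct parent are trivially safe), but this must be spelled out against the path-based formulation of Definition~\ref{def:robust} rather than merely asserted from the fact that the root and its descendants up to leaves are correct.
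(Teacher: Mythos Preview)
Your proposal is correct and follows essentially the same pigeonhole-plus-round-robin argument as the paper's own proof. The only difference is that you spell out, against Definition~\ref{def:robust}, why a tree whose internal nodes are all correct is robust (the path-safety check for correct leaves), whereas the paper takes this implication for granted; your added care here is sound and does not change the overall structure of the argument.
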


\begin{proof} Algorithm~\ref{algo:robustGraph} divides all processes into disjoint bins of size $I$ for a total of $\frac{N}{I}$ bins. As long as $\frac{N}{I} \geq f + 1$ there are at least $f+1$ bins and, therefore, there are at most $f$ bins containing faulty processes. 
Upon reconfiguration, the algorithm picks a tree whose internal nodes are drawn from the next bin, hence guaranteeing that a robust static graph is found after at most $f+1$ steps.
    %If the system is reconfigured by picking the next static graph, from the sequence of static graphs generated by Algorithm~\ref{algo:robustGraph}, it is guaranteed to find a robust graph after at most $f+1$ reconfigurations.
\end{proof}

\subsection{Fanout vs Dependability Tradeoffs} 

In a perfect $m$-ary tree, we have  $
\approx \frac{N}{m}$ internal nodes. When using Algorithm~\ref{algo:robustGraph}, Optimal Conformity can only be achieved if $f<m-1$. For most systems, this is significantly smaller than the theoretical maximum number of faults of $\frac{N-1}{3}$. 
This results in an inherent tradeoff between load distribution and the maximum fault-tolerance that may be achieved.
%by Algorithm~\ref{algo:robustGraph}. 
Roughly, by setting a fanout of $m$, we reduce the load of the root by a factor of $m$, but we also limit the tolerated faults to $\approx \frac{m}{N}$.
%rwhen reducing the load of the root by a factor of $m$, we also limit the percentage of tolerated faulty nodes to $m\%$ of the system size. 
We argue that this tradeoff is reasonable in practice.  Consider for instance a system with 421 nodes. By selecting $m=20$, \thesystem can organize the nodes in a perfect tree where the leader has only $5\%$ of the work compared to a star topology, while still tolerating $20$ simultaneous process faults.

\subsection{Integration with HotStuff} 

The reconfiguration strategy described above can be applied to HotStuff as follows. In HotStuff, if no consensus is achieved after a timeout, each process compiles a \textit{new-view} message that includes the last successful quorum and sends it to the next leader. In turn, the leader candidate awaits for $2f+1$ \textit{new-view} messages and, depending on the collected information, either continues the work of the previous leader (if a block was previously locked) or proposes its own block (if no block had been locked yet). Similarly, in \thesystem, and upon timeout, each process invokes the \textsc{build} function of Algorithm~\ref{algo:robustGraph} to construct the next tree and sends the same \textit{new-view} message to the root of the new tree. The root then also awaits $2f+1$ \textit{new-view} messages before re-initiating the protocol.

\section{Pipelining and Piggybacking}
\label{sec:pipelining}

As described earlier, HotStuff requires four rounds of communication for each instance of consensus. If HotStuff waited for each instance to terminate before starting the next one, the system throughput would suffer significantly. Therefore, HotStuff relies on a pipelining optimization, where the $i+1$ instance of consensus is started optimistically, before instance $i$ is terminated.
As a result, at any given time, each process participates in multiple consensus instances. Furthermore, to reduce the number of messages, HotStuff piggybacks together messages from the multiple instances that run in parallel.

By following the same structure of HotStuff, \thesystem is amenable to the same optimization. However, because \thesystem uses a tree to disseminate and aggregate messages, the latency to terminate a given round, and hence a consensus instance, is substantially larger than in HotStuff. 
This opens the door for more aggressive pipelining, allowing \thesystem to execute more consensus instances in parallel. Next, we discuss how pipelining is applied by \thesystem, not only to mitigate the effects of the higher per round latency but also to improve the overall throughput by leveraging the parallelism enabled by the dissemination and aggregation tree.

\subsection{Pipelining in HotStuff}

We start by providing an overview of HotStuff's pipelining using as an example the same seven node system previously  introduced in Figure~\ref{fig:star}.
Figure~\ref{fig:hotstuff-pipelining} illustrates the execution of multiple rounds of consensus in HotStuff (each round is depicted in a different colour). 

Consider the first round (in yellow) that starts with the leader sending the block to all other processes. The overall time required qdepends on the size of the data being transmitted (for simplicity, let us assume it is dominated by the block size), the available bandwidth, and the total number of nodes. To conclude a round, the leader has to collect a quorum of signatures. These signatures start flowing towards the leader as soon as the first process receives the message from the leader and sends its reply back. Thus, in a given round, the dissemination and the aggregation procedures are partially executed in parallel.  Also, by the time the leader finishes transmitting the block to the last process, it might have already received $2f+1$ signatures (\textit{quorum received} arrow in the figure) and hence it is already processing them while the dissemination terminates.  Thus, as soon as the dissemination of the first round finishes the leader might already be able to start the second round of consensus. 

To implement pipelining, HotStuff optimistically starts a new instance of consensus by piggybacking the messages of the first round of the next consensus instance with the messages of the second round of the previous instance. Because HotStuff uses four rounds of communication, this process can be repeated multiple times, resulting in messages that carry information from up to four pipelined instances of consensus. In HotStuff the depth of the pipeline (i.e., the maximum number of consensus instances that run in parallel) is thus equal to the number of communication rounds. 

\begin{figure}
     \centering
	\includegraphics[width=0.8\columnwidth]{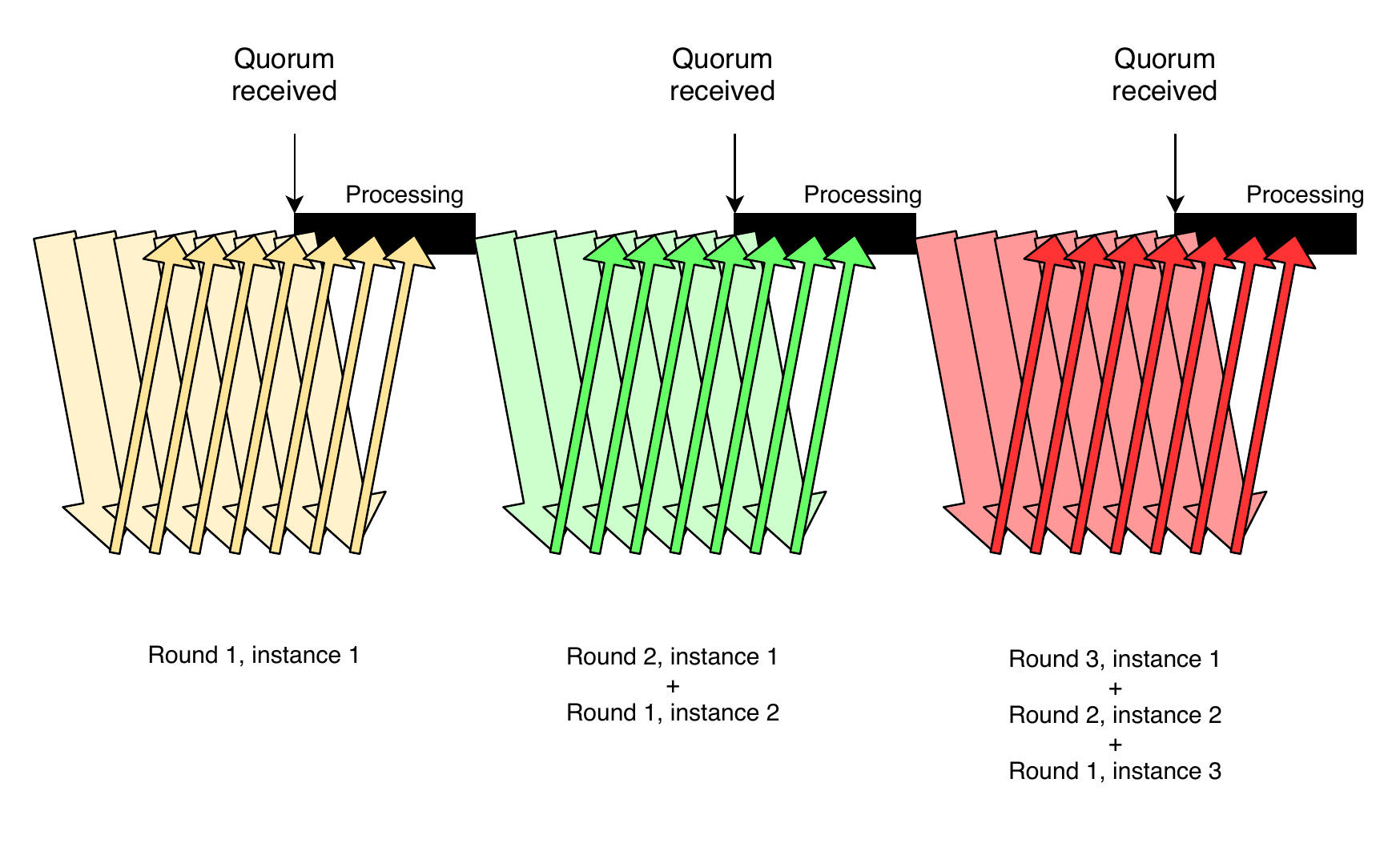}
	\caption{Pipelining in HotStuff}
  \label{fig:hotstuff-pipelining}
\end{figure}

\subsection{Pipelining in \thesystem}

In \thesystem, we enhanced the pipeling technique to fully leverage the load balancing properties of the tree. We illustrate this in Figure~\ref{fig:kauri-pipelining}. As with HotStuff, the aggregation phase of a given round starts before the dissemination phase is complete and, therefore, there is potential for parallelism between the two phases of the same round. However, in \thesystem, and in contrast to HotStuff, the root completes its dissemination long before it has collected a quorum of replies for that round. 

Like HotStuff, \thesystem implements pipelining by having the leader start a new instance of consensus shortly after the root finishes disseminating the messages of a previous instance.  The key difference is that \thesystem is able to start multiple consensus instances during the execution of a single round of consensus. In the example of Figure~\ref{fig:kauri-pipelining}, the leader is able to start 3 new instances during the execution of the first round of a given consensus instance.  Note that, in this example,  the messages from the second round of instance 1 are piggybacked with messages from the first round of instance 4, \emph{i.e.} a message carries information from consensus instances/rounds that are farther away in the pipeline. The increase in the pipeline depth allows for a higher degree of parallelism, and hence throughput, as we discuss in the following sections.

\begin{figure}
     \centering
	\includegraphics[width=0.6\columnwidth]{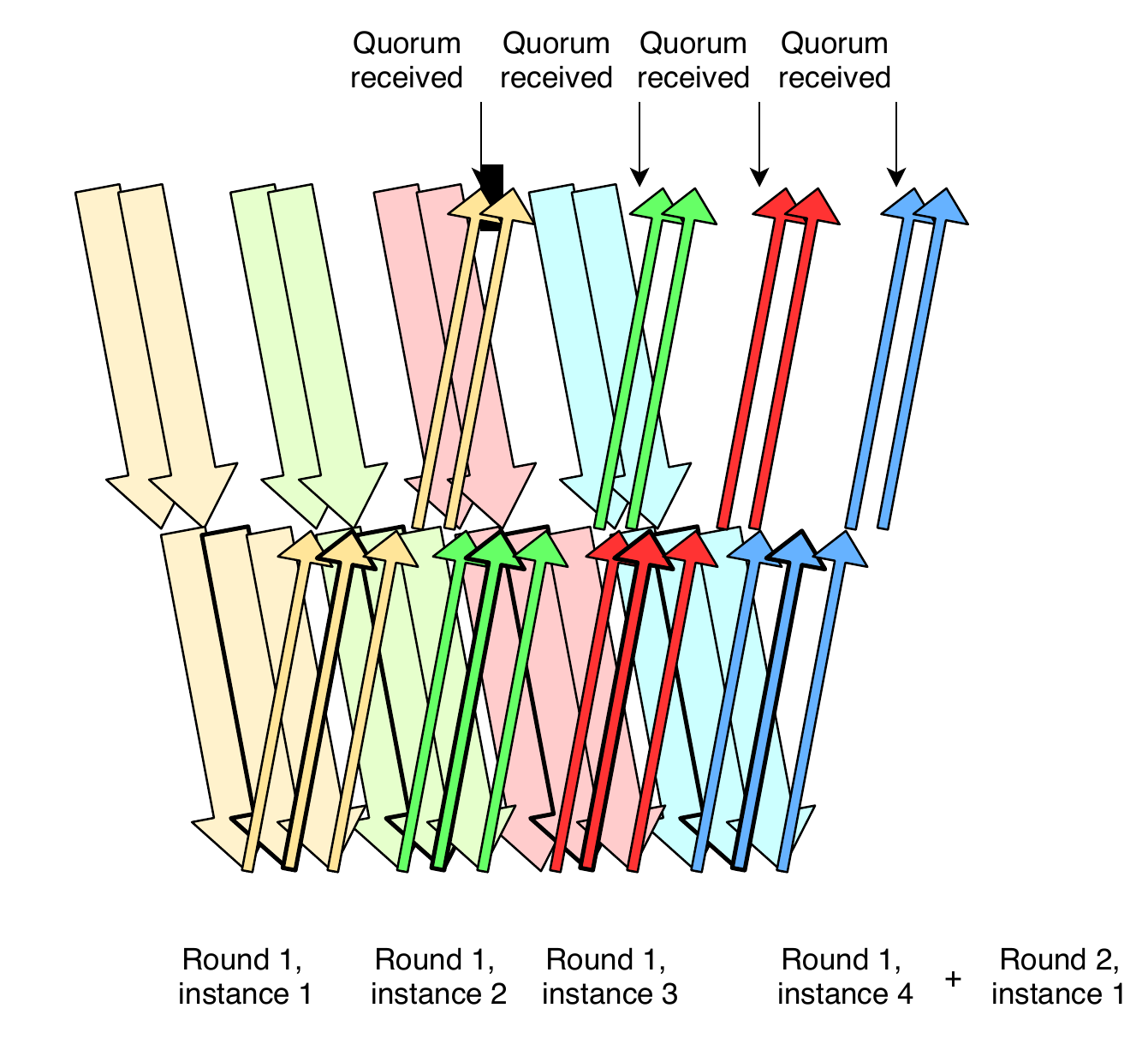}
	\caption{Pipelining in \thesystem}
  \label{fig:kauri-pipelining}
\end{figure}

\subsection{Pipelining Depth}
\label{sec:pipelining-throughput}

We now discuss the factors that affect  \thesystem's pipeline depth. Roughly speaking, the depth of the pipeline is given by the number of consensus instances the root node, or leader, is able to start during a round times the number of rounds of the protocol (four in our case, the same HotStuff).

To simplify the presentation, lets assume that the size of the messages sent by the root are dominated by the block size. 
Given the block size $B$ and the network bandwidth $b$ the time the root takes to send each message\footnote{For simplicity we assume the root is able to fully utilize the available bandwidth} is given $\frac{B}{b}$. 
With a fanout of $m$, the time the root requires to execute the dissemination phase is given by $\frac{mB}{b}$. After sending the last message, the root may still be required to process some of the replies, during some time $\phi$  before being ready to send a new block (illustrated by the black box in Figure~\ref{fig:hotstuff-pipelining}). We denote  the sum of the transmission plus processing time as the \textit{Busy Time (BT)}, which is given by  $\textit{BT} = \frac{mB}{b}+\phi$.

Let $T_{last}$ be the instant where the root sends the last message in a round. The root has to wait at least $h * \textit{RTT}$ to receive the last response, where \textit{RTT} is the average roundtrip time in the network and $h$ is the depth of the dissemination tree. We denote this as the \textit{Idle Time (IT)}, which is given by  $\textit{IT} = h \textit{RTT}$

Therefore, the number of parallel instances the root can start is bound by  $\frac{\textit{IT}}{\textit{BT}}$.  %Table~\ref{tab:examples} provides some examples of the impact of the height, fanout, block size, network bandwidth and RTT on the maximum number of parallel instances. 
To simplify this analysis, for now, we consider the processing time $\phi$ to be negligible but, in practice, and as we will discuss in the evaluation, the cost of cryptographic operations may also become a bottleneck in a real deployment. 
%The table first illustrates how additional pipelining does not yield any benefit on star topologies (first row) and also, how with increasing round trip time the idle time increases linearly and thus, the potential number of parallel instances also increases. It also shows that with a constant fanout, even with an increase of depth and scale, the pipelining may be used to compensate for the performance loss as well as that a decrease in fanout may allow to speedup the system significantly. 

%\begin{table}
%\centering
%{\scriptsize
%\caption{Maximum number of parallel consensus instances for different system configurations, assuming a  block size of 100K. The first row corresponds to a star and the remaining ones to trees.}
%\label{tab:examples}
%\begin{tabular}{|rrrrr|rrr|}
%\hline
%\textbf{N} & \textbf{h} & \textbf{m} &  \textbf{Band.} & %\textbf{RTT} & \textbf{Busy} & \textbf{Idle} & \textbf{Parallel} \\
%& &  &    (Mb/s)& (s) & \textbf{Time} & \textbf{Time} & \textbf{Instances} \\
%\hline

%100 & 2 & 99 & 25 & 0,20 & 0,396 & 0,2 & 0 \\ \hline
%100 & 3 & 10 & 25 & 0,20 & 0,04 & 0,4 & 10 \\
%100 & 3 & 20 & 25 & 0,20 & 0,08 & 0,4 & 5 \\
%421 & 3 & 20 & 100 & 0,20 & 0,02 & 0,4 & 20 \\
%421 & 3 & 20 & 100 & 0,10 & 0,02 & 0,2 & 10 \\
%1110 & 4 & 10 & 100 & 0,20 & 0,01 & 0,6 & 60 \\

%\hline
%\end{tabular}
%}
%\end{table}

%\subsection{Maximum Speedup}

In HotStuff, the leader needs to send $N-1$ messages before it can start a new instance. In \thesystem, the leader needs to send $m$ messages (where $m$ is the fanout of the tree) before it is ready to start a new instance. Therefore the theoretical speedup that can be achieved by \thesystem is bound by $\frac{N-1}{m}$. For instance, in a system of $400$ nodes, organized in a tree with fanout $20$, the speedup \thesystem can offer is $19.95$.

%Note that, the lower the fanout, the higher the speedup provided by \thesystem. However, as discussed in Section~\ref{sec:limitations}, increasing the number of internal nodes results in longer reconfiguration times.
%This exposes a tradeoff between speedup and reconfiguration time that should be taken into account when deploying \thesystem.
%Therefore, in the implementation we need to balance the speedup with the reconfiguration speed. 
%Recall that in deployments where \thesystem  must ensure Optimal Conformity, 
%In particular, when \thesystem is configured to be able to be reconfigured in optimal time, 
%the number of internal nodes is limited by $I =\frac{af+a}{f+1}$ (for a tree of depth 3, $m=I$). 
%Thus, in a scenario with  arbitrarily deep and perfectly balanced trees, $m^{h-1} = I$, and hence    the maximum speedup is bound by $\frac{N-1}{\root h-1 \of a}$.\mm{?}

\section{Implementation}
\label{sec:implementation}

The implementation of \thesystem is based on the HotStuff open source implementation available at \url{https://github.com/hot-stuff/libhotstuff}. We extended the existing implementation to support the
 \textsl{disseminate-to-followers} and \textsl{aggregate-at-leader} primitives as specified in Algorithm~\ref{algo:disseminate} and Algorithm~\ref{algo:aggregate}, respectively.
Moreover, we also added support for the \textit{bls} cryptographic scheme  by adapting the publicly available implementation used in the Chia Blockchain~\cite{bls,chia}. This allows internal nodes to aggregate and verify the signatures of their children, and thus balance the computational load.
Both the verification cost of the signature aggregates, and the size of aggregates is a small constant $\mathcal{O}(1)$ complexity, contributing to the overall efficiency of the implementation. Overall, these changes required the addition/adaptation of $\approx 1300$ loc.

%HotStuff implements pipelining by leveraging the unique blockchain properties. Thus, a new block is produced in each round of communication where the first quorum for block $i$ is implicitly the second quorum of block $i-1$ etc. Our pipelining extends this strategy, thus, depending on the number of pipelined blocks $p$ the quorum for block $i$ implicitly is also the quorum for block $i-1-p$.
 
To implement pipelining, we use an estimation of the parameters discussed above to compute the ideal time to start the dissemination phase for the next consensus instance. 
In the current implementation, we use a static pre-configured value but this could be automatically adapted at runtime, which we leave for future work.
%pAlthough the current implementation uses a static, pre-configured, pipelining depth, it is possible to extend \thesystem to auto-tune this parameter; this is left for future work.

%Following that, we've adjusted the block creation code of HotStuff to support pipelining by adjusting the block hierarchy to include the blocks that are currently being pipelined. Similarly, we've had to adjust the code that appends the block to the chain to prematurely append pipelined blocks after receiving a first quorum to be finalized at a later date. For this reason, we had to adjust the finalization code to trigger an additional pipelined block and finalize the block with the right offset (after sufficient quorums have passed).

\section{Evaluation}
\label{sec:evaluation}

In this section, we evaluate the performance of \thesystem in comparison to HotStuff. 
We assess the following: i) performance comparison of the different cryptographic schemes to establish a baseline; ii)  throughput comparison between \thesystem and HotStuff; iii) impact of the pipelining; and iv)  reconfiguration time. 

\subsection{Experimental Setup}

We use three system sizes with $N={100, 200, 400}$ processes and a fixed tree height $h=3$.
Rather than using some $N$ that yields perfect $m$-ary trees, which is unlikely in a real deployment, we instead deploy the above system sizes and distribute nodes among the internal nodes equally to approximate a balanced tree as follows: for $N=100$, $m=10$ for the root and $m=[8,9]$ for the other internal nodes; for $N=200$ the root's fanout is $m=15$, and for the other internal nodes $m=[12,13]$, and for $N=400$, $m=20$ for the root and $m=[18,19]$ for the remaining internal nodes.

All experiments were performed on the Grid'5000 testbed~\cite{grid5000}. Each machine comes with two Intel Xeon E5-2620 v4 8-core CPUs and 64 GB RAM. We used a total of 12 physical machines and evenly distribute the processes across the machines with the following restrictions: i) the leader always runs on a dedicated machine only co-located with its client process and ii) children and parent processes are always on different physical machines. Besides processing capacity, network capacity also plays a key role in the performance of the system.
To model the network characteristics, we use Linux's \textit{netem} to restrict bandwidth and latency among processes according to three different scenarios used in previous work and real uses-cases.

We consider three different deployment scenarios, namely the \textit{large-scale} deployment, the \textit{regional} deployment, and the  \textit{national} deployment.
The \textit{large-scale} deployment models a globally distributed blockchain as used in other works~\cite{motor, algorand, byzcoin} with $200ms$ roundtrip latency and $25Mb/s$ bandwidth.  The two other scenarios model reported industry use-cases~\cite{blockchain50}, in more limited geographical deployments for example for local supply-chain management, exchange of medical data, or digital currency (E-Euro, or E-Dollar).
The \textit{regional} deployment, captures a deployment in a large country or unions of countries, such as US or the EU, with $100ms$ roundtrip latency and $100Mb/s$ bandwidth.
Finally, the \textit{national} deployment models a setting where nodes are closer to each other, connected by a network  with $20ms$ roundtrip time and $100Mb/s$ bandwidth. 

We use a block size of $100Kb$, containing $400$ operations, as used for most scenarios in the original HotStuff's evaluation\cite{yin2019hotstuff}.
Note however that our work addresses geo-distributed deployments, which are significantly different than the local datacenter scenarios considered in~\cite{yin2019hotstuff}, that had a $1ms$ roundtrip latency and up to $8Gb/s$ of bandwidth usage.

\subsection{Effect of Cryptographic Operations in the Busy Time}

As discussed in Section~\ref{sec:pipelining}, the system throughput is limited by the Busy Time at the root, \emph{i.e.}, the time the root takes to send a block to its children, plus the additional time $\phi$ it requires to process the replies. Note that, in practice, communication and computation can partially overlap, and thus $\textit{BT} = \frac{mB}{b}+\phi$  provides just an approximation. In system with low bandwidth, \textit{BT} will be dominated by the communication, and in systems with low processing power it may be dominated by $\phi$. Next, we evaluate the values of $\phi$ for different  cryptographic schemes: i) \textit{secp256k1} used by HotStuff, ii) \textit{bls}~\cite{bls,chia} used by \thesystem
 and iii) \textit{gpg}~\cite{gnupg}, a classic cryptographic scheme.

Figure~\ref{fig:crypto} shows the cost of a single sign and verify operation averaged over 10 runs for each scheme.
As one can see, \textit{gpg} is substantially slower than \textit{bls} and \textit{secp256k1}, with the latter clearly outperforming \textit{bls}.
However, while the cost of \textit{gpg} and \textit{secp256k1} grows linearly with the number of signatures, \textit{bls} amortizes this cost through aggregation. 
The results for \textit{bls} aggregation are show in Figure~\ref{fig:blsagg} and can be decomposed into aggregation of multiple signatures to create one single aggregated signature and the aggregation of public keys to create an aggregated public key that is required to verify aggregated signatures. Interestingly the aggregation cost decreases quickly with the number of elements (number of signatures, public keys) while the verification cost remains constant (Figure~\ref{fig:crypto}). In contrast, the verification cost of the other schemes grows linear with the number of signatures.
This means that after $\approx 55$ signatures (\emph{i.e.} inputs of processes), the overall cost of \textit{secp256k1} surpasses that of \textit{bls}.

\begin{figure}
	\centering
	\subfloat[Single operation\label{fig:crypto}]{\includegraphics[width=0.49\linewidth]{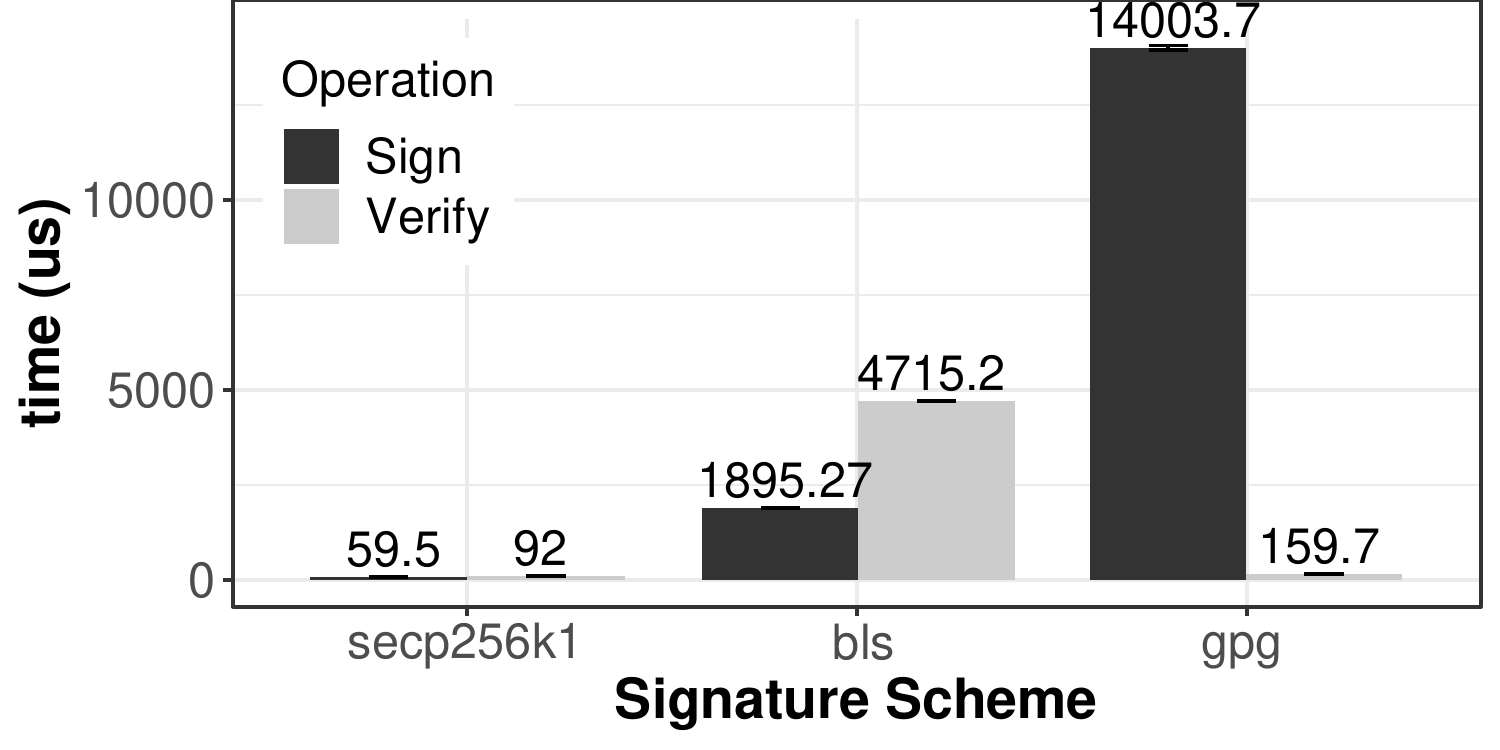}}\,
	\subfloat[\textit{bls} Aggregation\label{fig:blsagg}]{\includegraphics[width=0.49\linewidth]{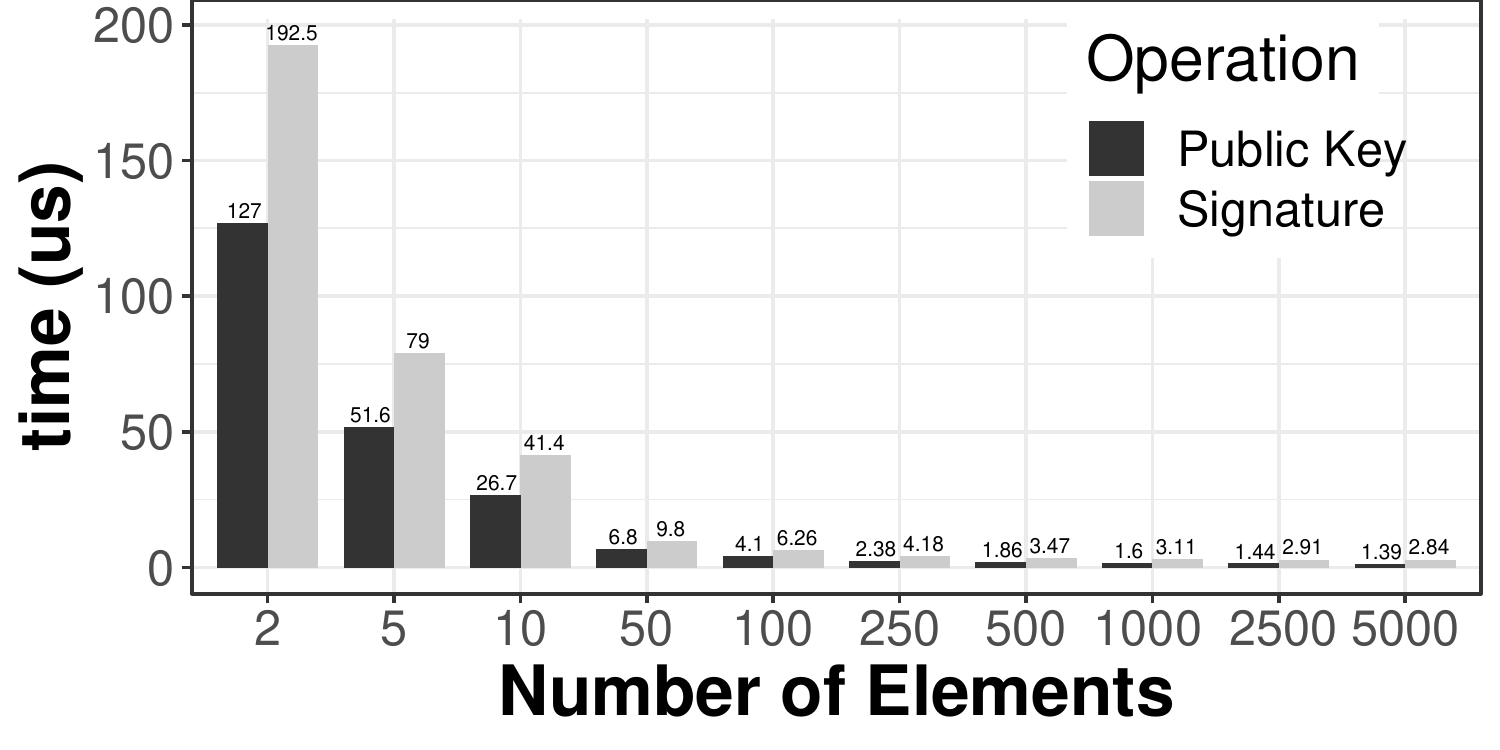}}
    \caption{Cost of cryptographic operations (Sign and Verify)}
    \label{fig:cryptocost}
\end{figure}

\subsection{Pipelining}

We now assess the effects of pipelining on speedup.
Table~\ref{table:phi} shows the results for different system sizes in the \textit{large-scale} scenario.
The Idle Time (IT) is fixed for each system (Section~\ref{sec:pipelining-throughput}), and we measured the two components of \textit{BT} to estimate the pipelining stretch $(\frac{IT}{BT})$ and maximum speedup.
Note, that the reported pipelining stretch, is a multiplicative factor that is applied to the base pipelining of HotStuff (4 consensus instances). 

As it is possible to observe, even though the base $\phi$ of \thesystem is higher than HotStuff, it grows much slower with the number of processes and, thus, provides better opportunities to increase the pipelining depth both in terms of its computational cost as well as in terms of bandwidth consumption.
In fact, the opportunities for extra pipelining in HotStuff are negligible, showing that further increasing it does not result in additional performance benefits.
In contrast, in \thesystem, as we increase the number of processes, pipelining can be used to create a significant speedup compared to HotStuff.
Note that the estimated speedup of \thesystem differs from the theoretical results of Section~\ref{sec:pipelining-throughput}. For instance, for $N=400$ and $m=20$, the theoretical speedup is $19.95$ while the estimated speedup in Table~\ref{table:phi} is $32.0$.
This can be explained as follows.
The theoretical analysis considers that the dominating factor in the message size is the block size. 
However, the use of \textit{secp256k1} signatures by HotStuff results in a growing overhead as the system size increases ($2f+1$ signatures each of size $64$ bytes) when compared to \textit{bls} which remains constant ($m$ signatures of $96$ bytes each).
Note that this is an inherent limitation of  \textit{secp256k1} cryptographic scheme that only gets more severe as the system size grows (assuming a constant block size). We further study the impact of this in the next section.

\begin{table}
\begin{center}
{\scriptsize
\begin{tabular}{l|r|r|r|r|r|r|r|r}
\hline
\# & root's & Idle & $\phi$ & & Busy & \multicolumn{2}{c|}{Pipelining} & Esti. \\\cline{7-8}
nodes & fanout  & Time &  (ms) & $\frac{mB}{b}$ & Time & base & stretch & speedup \\
 & $m$   &  &   & &  & &  &  \\\hline
\multicolumn{8}{c}{HotStuff }  \\ \hline
100 & 99 & 200 & 10,58 & 550 & 561 & 4 & 1 & - \\
200 & 199 & 200 & 19,66 & 1375 & 1395 & 4 & 1  & -  \\
400 & 399 & 200 & 36,91 & 3843 & 3880 & 4 & 1  & - \\\hline

\multicolumn{8}{c}{\thesystem} \\ \hline
100 & 10 & 400 & 31,98 & 42 & 74 & 4 & 6 & 7.5 \\
100 & 20 & 400 & 32,83 & 83 & 116 & 4 & 4 & 4.8 \\
200 & 20 & 400 & 37,33 & 83 & 120 & 4 & 4 & 11.6 \\
400 & 20 & 400 & 37,90 & 83 & 121 & 4 & 4 & 32.0 \\\hline

\end{tabular}
}
\end{center}
\caption{Pipelining depth for the \textit{large-scale} deployment with different system sizes and estimated speedup vs HotStuff.}
\label{table:phi}
\end{table}

%We now show the effect of pipelining and compare the observed behaviour with the behaviour predicted by our simplified performance model on the \textit{large-scale} deployment. Table~\ref{table:phi} shows depicts the predicted maximum pipeling depth for different fanouts of the tree, both when considering only bandwidth and when considering only cpu consumption. For instance, for fanout 10, when considering the bandwidth limitation alone, the maximum speedup over HotStuff is 10x. However, this configuration is cpu-bounded and, given that $\phi \approx 190$, the  expected maximum pipeline depth is close to 5. When using a fanout of 20, the predicted maximum depth is approximately 5, considering either bandwidth or cpu.

We now analyze the effect of pipelining on throughput and compare it with the estimated speedup values of Table~\ref{table:phi}. This experiment highlights that the pipelining stretch supported by \thesystem can bring substantial performance benefits. 
Results are show in Figure~\ref{fig:figpiperes} for $N = 100$ and  $m = 10,20$ in the \textit{large-scale} scenario.
For $m=10$ we can see a linear increase in throughput up to $7x$ after which performance starts to degrade (the stretch in Table~\ref{table:phi} is $6x$ for this configuration).
Note, that the measured speedup is slightly higher than the estimated one due to some overlap in communication ($\frac{mB}{b}$) and computation ($\phi$) not captured in our simplified model.
For $m=20$, we observe a linear growth until $5x$ after which throughput also degrades albeit slower than for $m=10$, until a substantial drop at $8x$ in both configurations due to saturation.
%As is is possible to observe,  by increasing the pipelining stretch (from the base stretch of 1, that uses the same pipeling depth as HotStuff) we observe an almost linear increase in throughput up to 
%$7x$ where the system becomes saturated and performance decreases.
%Finally, while the second curve (fanout 20) increases until 7x, it saturates already at 5x which can be observed from the sublinearity after 5x. 

%The same does not happen with fanout 10 after 8x, since the network library used in HotStuff has difficulties coping with the high pipelining/latency scenario.

\begin{figure}
\begin{center}
	\includegraphics[width=0.6\columnwidth]{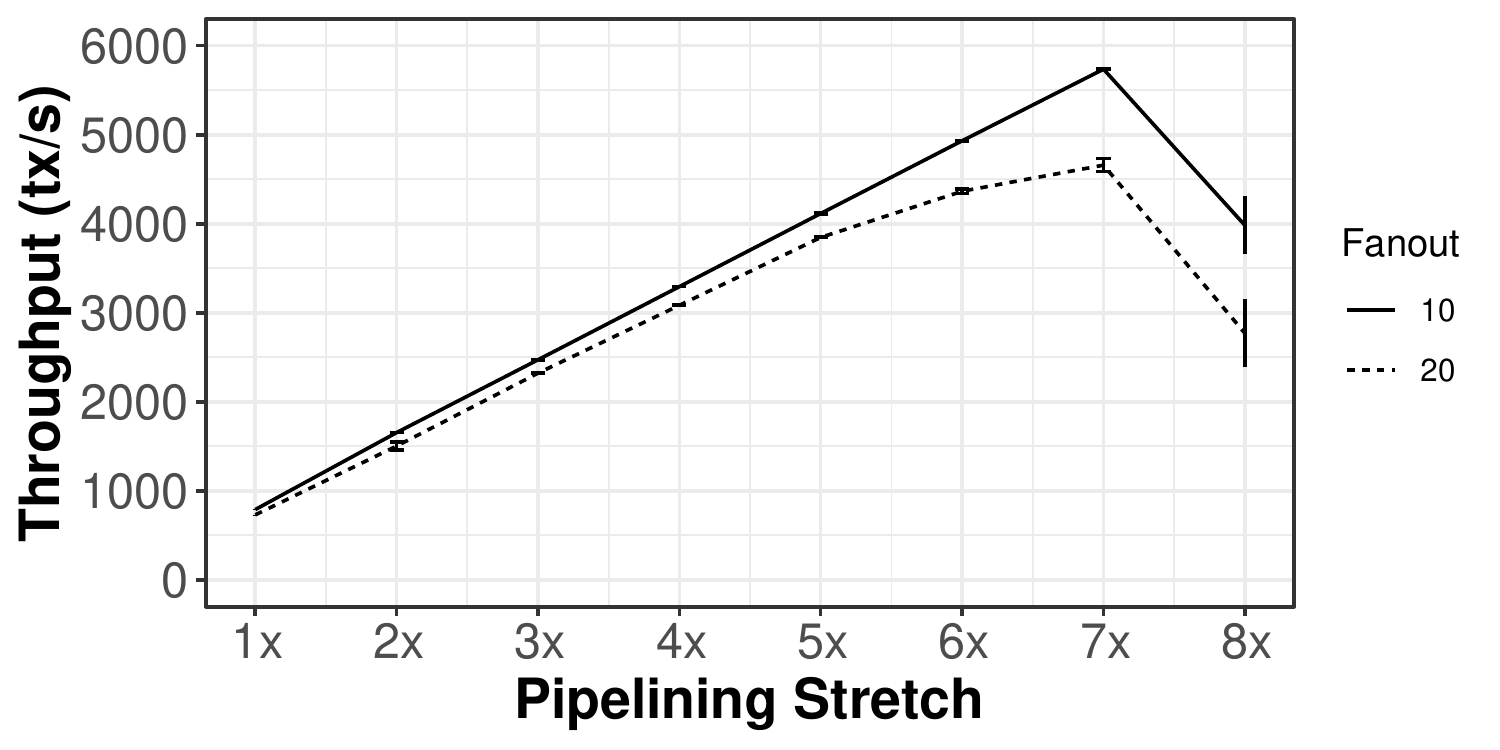}
\end{center}
\caption{Effect of pipelining stretch on \thesystem's throughput for $N=100$ and $m=10,20$.}
\label{fig:figpiperes}
\end{figure}

%We now show the effect of pipelining and compare the observed behaviour with the behaviour predicted by our simplified performance model on the \textit{large-scale} deployment. Table~\ref{table:phi} depicts the predicted maximum pipeling depth for different fanouts of the tree. For instance, for fanout 10, the maximum pipelinig depth is 5,4x. When using a fanout of 20, the predicted maximum depth is approximately 3,4x.

%Figure~\ref{fig:figpiperes} shows the experimental results obtained when using two different fanouts. As may be observed, we were able to improve the system throughput, almost linearly, by increasing the pipelining depth up to the predicted value. In fact, the real system scales slightly above the predicted limit due to some overlap between the communication and computation cost ($\frac{mB}{b}$ and $\phi$). However, when the maximum pipelining depth is exceeded, the system saturates and the performance falls abruptly since the load exceeds the system capacity. The resulting instability of the system when operating well above its limits is also visible in the throughput variance.

\subsection{Throughput}

Next, we assess the performance of \thesystem both with full pipelining which we call \thesystemfull and with the base pipelining which we call \thesystemcrippled. We compare it with the regular HotStuff which uses the \textit{secp256k1} cryptographic scheme (HotStuff-secp) and also with an adapted version that uses \textit{bls} signatures (HotSuff-bls).
Figure~\ref{fig:throughput} depicts the results of this experiment for the different network scenarios. 
While the variants of HotStuff offer similar or slightly better performance with smaller system sizes when compared to \thesystemcrippled, this rapidly changes as the number of processes grows and eventually \thesystemcrippled takes over. This is explained by the fact that with smaller system sizes the additional latency of the tree is higher than the bandwidth and computing bottleneck of the star. However, that shifts as the number of processes increases.

\begin{figure*}[t]
	\centering
	\subfloat[large-scale ($200ms-25Mb/s$)\label{fig:scenB}]{\includegraphics[width=0.27\linewidth]{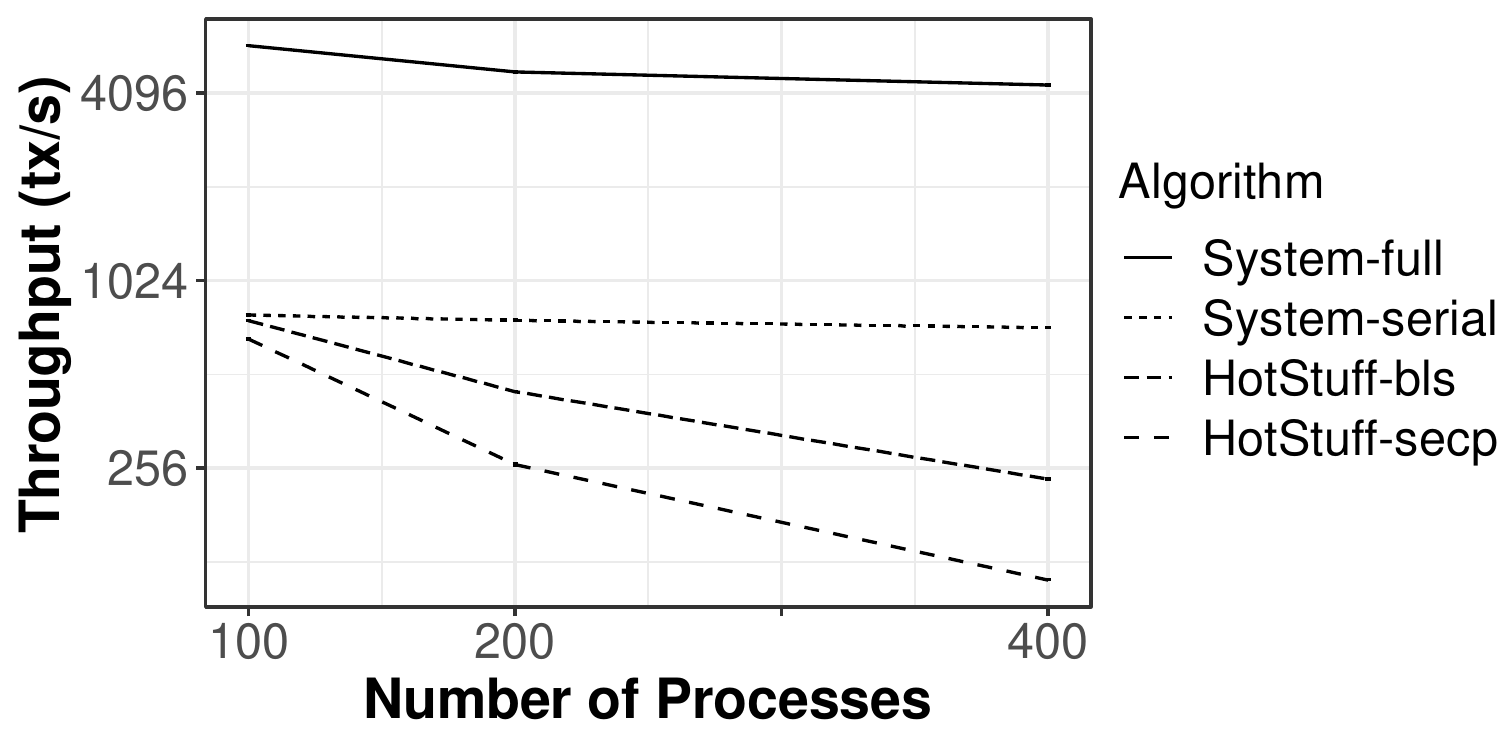}}\,\,\,
	\subfloat[regional ($100ms-100Mb/s$)\label{fig:scenA}]{\includegraphics[width=0.27\linewidth]{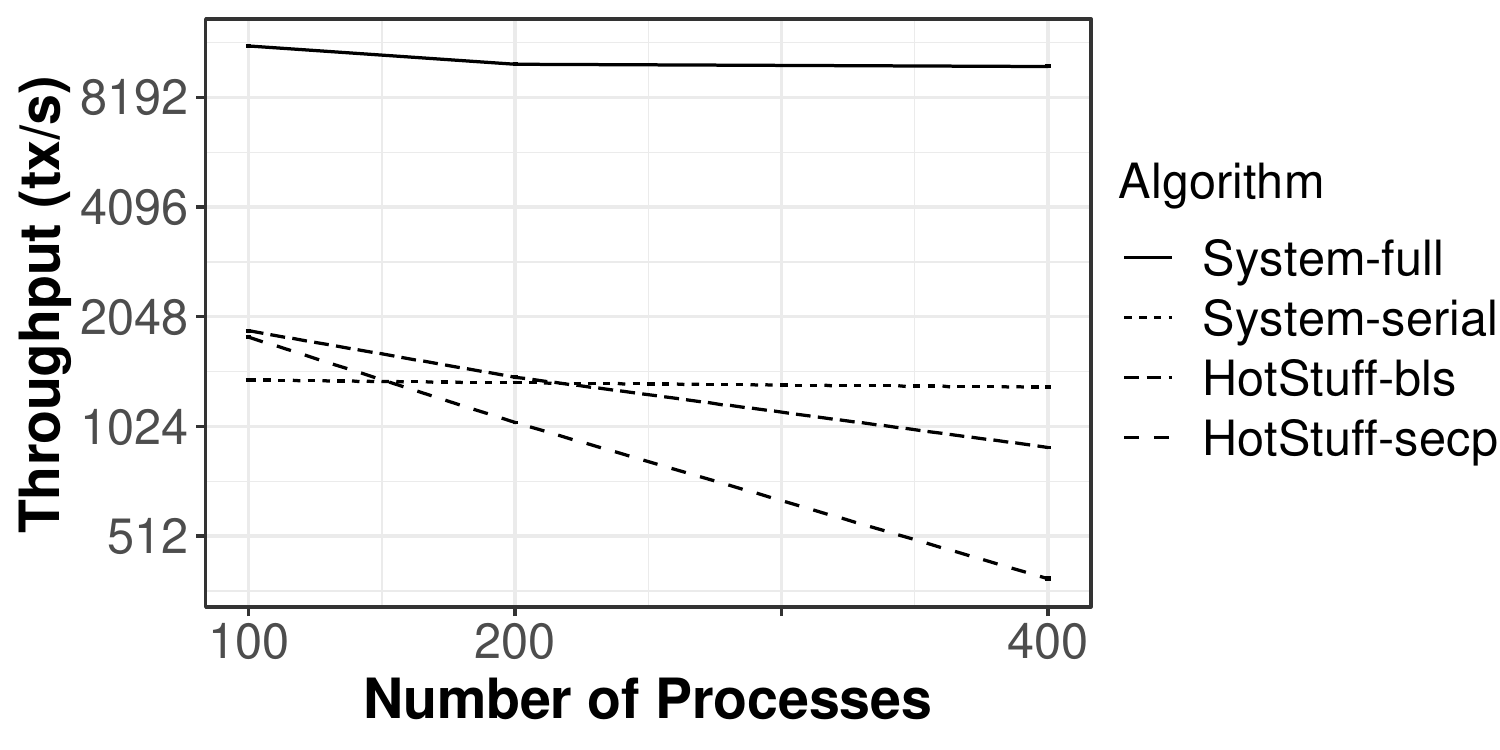}}\,\,\,
	\subfloat[national ($20ms-100Mb/s$)\label{fig:scenC}]{\includegraphics[width=0.27\linewidth]{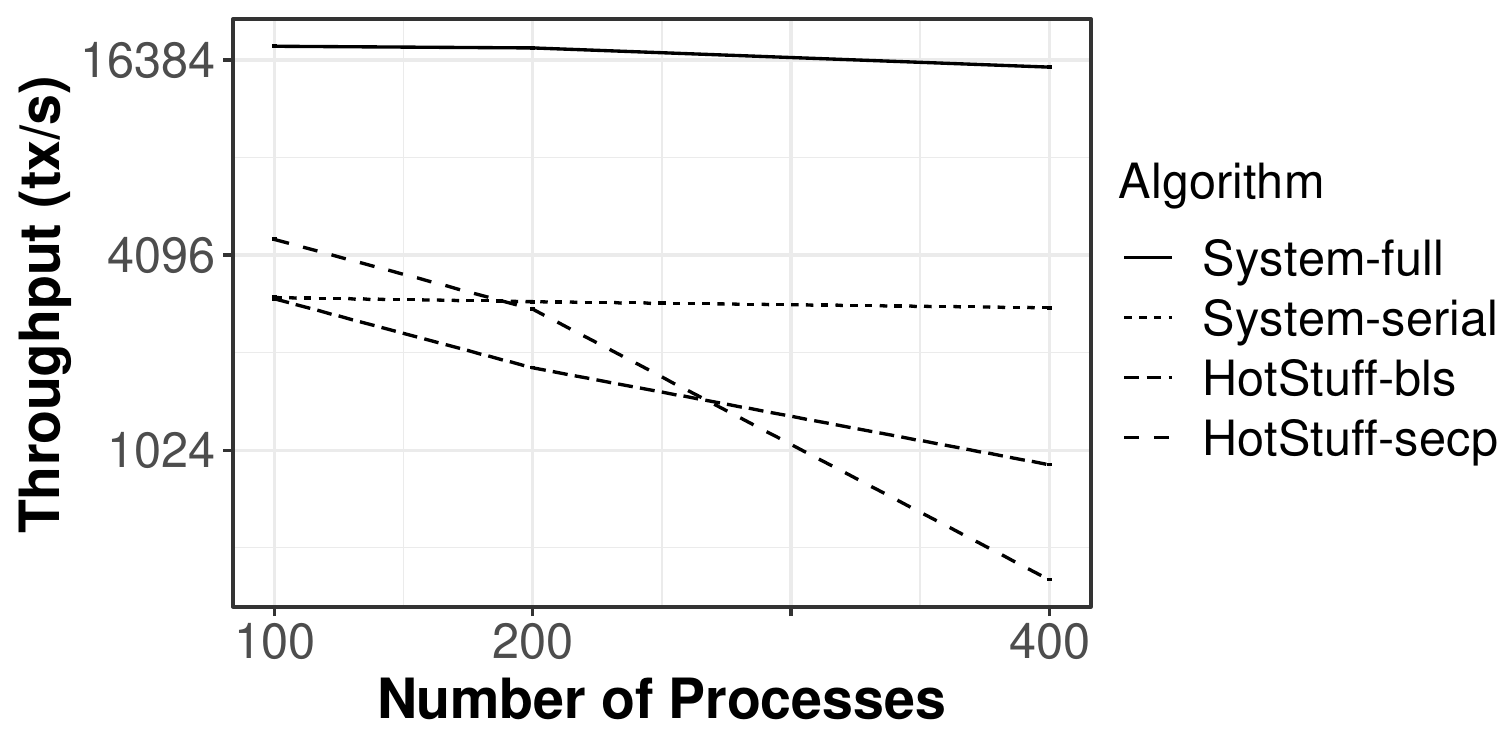}}
	\caption{Throughput for different latency/bandwidth configurations.}
	\label{fig:throughput}
\end{figure*}

Finally, the throughput of \thesystemfull is significantly higher than both variants of HotStuff.
As argued throughout the paper, this stems from the fact that the leader of the star becomes a communication and/or computation bottleneck depending on the scenario. This does not happen in \thesystemfull, which is able to fully exploit the load balancing capacity of all the internal nodes.
In fact, by leveraging pipelining, \thesystemfull is able to overcome the additional latency stemming from the use of a tree and %Moreover, \thesystemfull leverages pipelining to compensate for the higher latency of the tree allowing it to 
hence process a large number of operations per second where both the computational as well as bandwidth consumption are equally distributed over the tree.
This results in \thesystemfull offering up to 17x the throughput of HotStuff-bls, our adaptation, and up to 38x the throughput of vanilla HotStuff-secp.

%outlined before, in all scenarios, due to the large number of processes, the star bottlenecks at the bandwidth, significantly limiting its scalability.
%We may also observe this from the fact that the implementation of the star that uses \textit{bls-signatures} outperforms the one using \textit{secp} even though the computational cost is significantly higher. 

\subsection{Leader Replacement}

Finally, we evaluate the reconfiguration time in two faulty scenarios: i) one fault at the leader, and ii) three consecutive leader faults.
To calibrate the fault detection timeout, we started with a large value and experimentally decreased it until we found that further decreases would lead to spurious reconfigurations in a stable system.
This resulted in a timeout of $0.3ms$ for \thesystem and $0.75ms$ for HotStuff-bls. 
Figure~\ref{fig:reconfig} presents the results for a system with $N=100$ in the \textit{large-scale} scenario for \thesystemfull and HotStuff-bls.
We run each system for $100$ seconds, inject the fault at $60$ seconds and measure the impact on the throughput.
As we can observe, both HotStuff-bls and \thesystemfull can  recover, in both faulty scenarios, to the throughput before the fault in $5$ seconds for one fault and $8$ seconds for three consecutive faults.
The recovery time of \thesystemfull is thus comparable to HotStuff-bls despite the use of a more complex topology, hence stressing the effectiveness of our reconfiguration strategy.

%how \thesystem compares in terms of reconfiguration performance. For this purposed we have injected faults in the system to force one or more reconfigurations. In the both cases the current leader fails, but in one case, the next leader is correct and, in the other case, the next two  leaders are also faulty. For both Kauri and HotSuff we have experimentally computed that lowest timeout possible that would not lead to spurious reconfigurations in steady-state. This resulted in a timeout value of $0.3ms$ for \thesystem and $0.75ms$ for HotStuff. The results of those experiments are depicted in Figure~\ref{fig:reconfig}. The results show that \thesystem recovers quickly from the injected fault.  Also, shortly after recovery,  the pipelining is filled again, bringing the system back to the original throughput.

\begin{figure}[t]
	\centering
	\subfloat[1 fault\label{fig:reconfig1}]{\includegraphics[width=0.48\columnwidth]{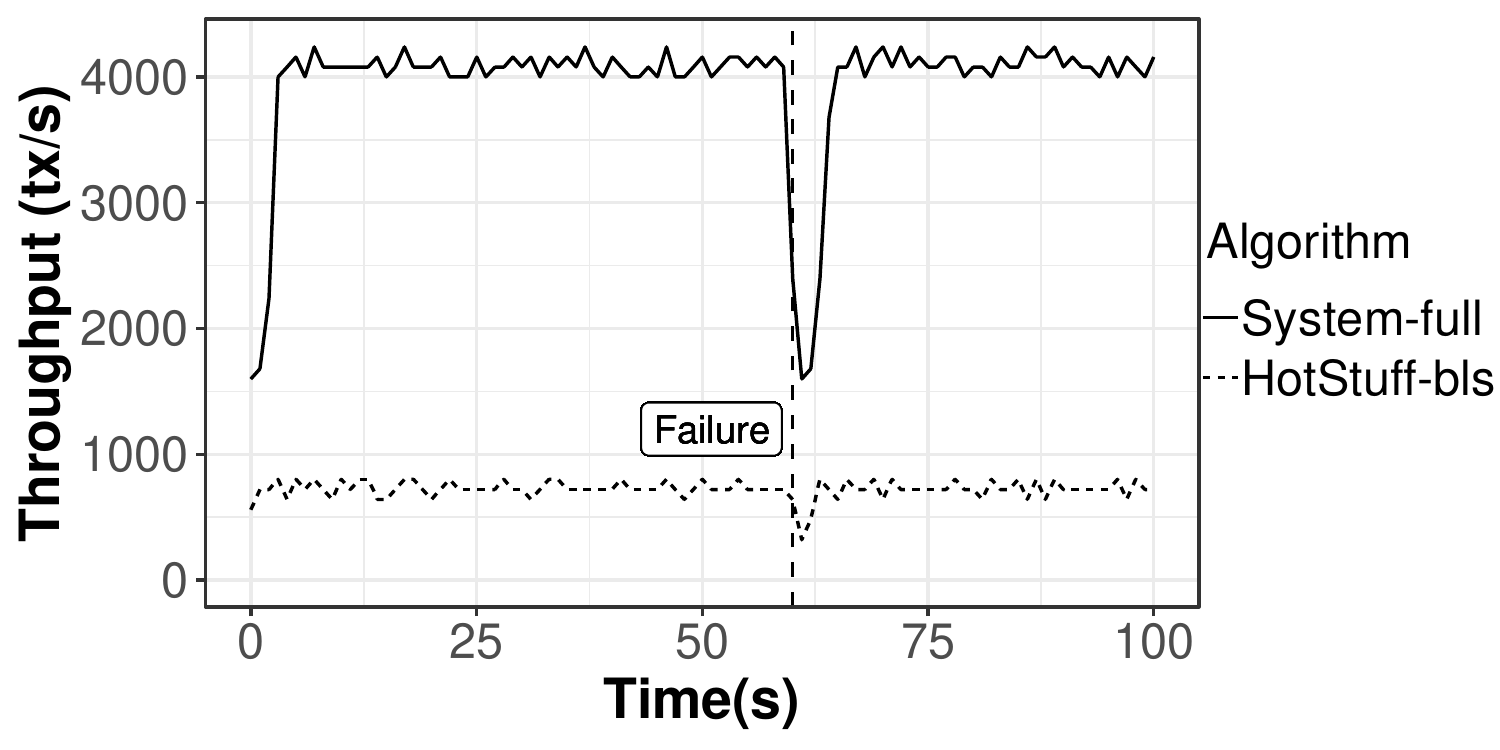}}\,
	\subfloat[3 simultaneous faults\label{fig:reconfig2}]{\includegraphics[width=0.48\columnwidth]{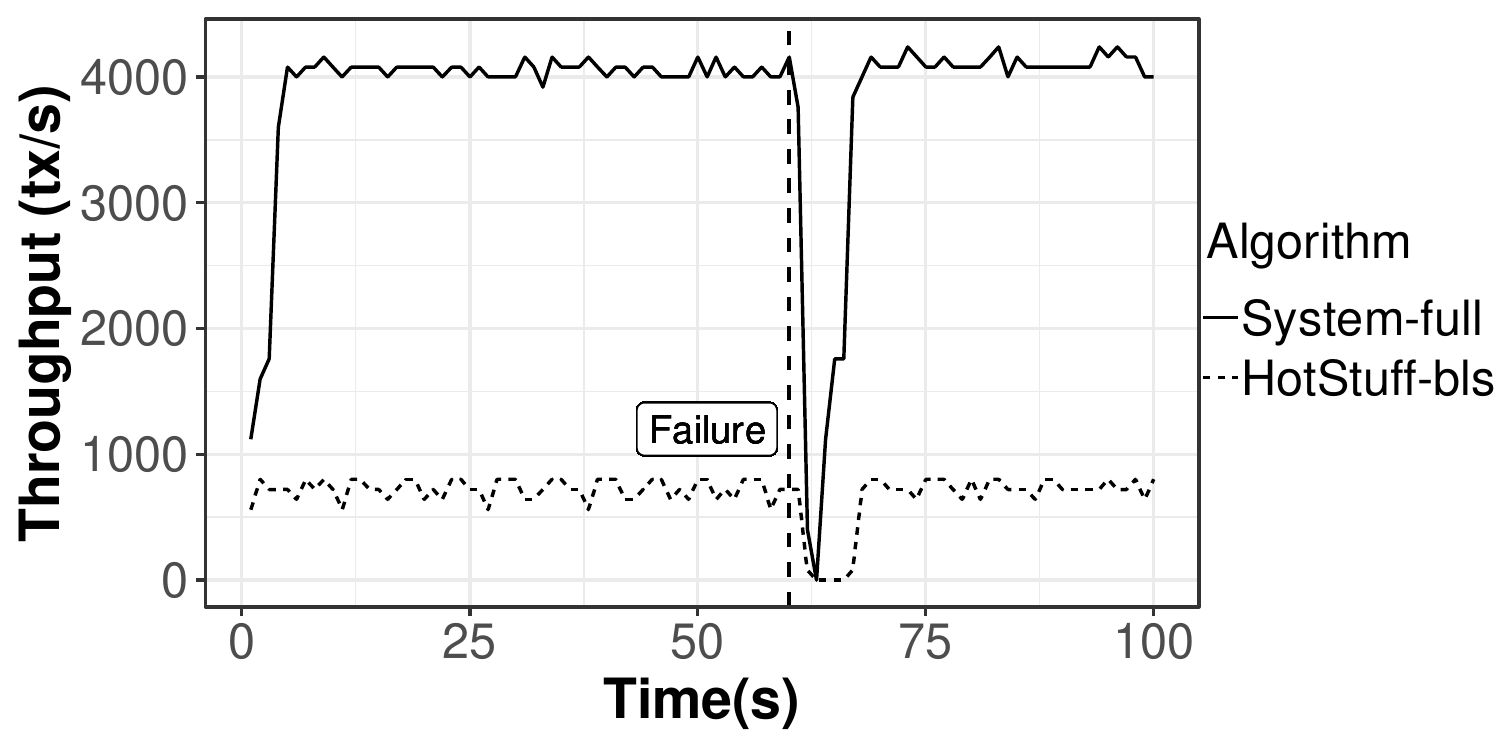}}
	\caption{Reconfiguration after failure}
\label{fig:reconfig}
\end{figure}

\section{Conclusion}
\label{sec:conclusion}

The growing interest in permissioned blockchains led to the appearance of novel consensus algorithms that aim at improving classic PBFT-based algorithms. However, state-of-the-art approaches still suffer from important limitations: some concentrate work on the leader, that becomes a bottleneck, others distribute the load but cannot recover from failures quickly. \thesystem overcomes these limitations by using a tree to distribute the communication and processing load, while ensuring recovery from  faults within an optimal number of steps.  Our distribution of work also allows to increase the number of instances that can run in parallel without overloading the system.  Our implementation and evaluation in realistic scenarios with up to 400 processes, shows that \thesystem can not only achieve up to 38x the throughput of HotStuff but also offer better scalability characteristics when increasing the number of processes. This shows that the mechanisms proposed here are a powerful tool to address the scalability limitations of permissioned blockchains.

%This was achieved mainly by departing from the all-to-all clique topology of PBFT and leveraging instead star or tree topologies. However, such state-of-the-art approaches still suffer from important limitations. On the one hand, while the use of a star increases efficiency by requiring fewer messages, it places a processing and communication bottleneck in the leader which limits scalability. On the other hand, trees are effective in balancing the load but degenerate to a star or clique upon failures.

%\thesystem leverages parallelism to significantly increase the throughout of consensus with large number of participants. In strike contrast with previous attempts of using trees to support the execution of consensus, \thesystem is able to reconfigure in an optimal number of steps. 

\bibliographystyle{IEEEtran}
\bibliography{bib-short}

\end{document}